\documentclass[journal]{IEEEtran}
\IEEEoverridecommandlockouts
\usepackage[utf8]{inputenc}
\usepackage{colortbl} 
\usepackage[T1]{fontenc}
\usepackage{textcomp}
\usepackage{fixltx2e}
\usepackage{microtype}
\usepackage{calc}
\usepackage[normalem]{ulem}
\usepackage{makecell}
\usepackage{diagbox}
\usepackage{pifont}
\usepackage{gensymb}
\usepackage{comment}
\usepackage{bbm}
\usepackage{amsthm}
\newtheorem{proposition}{Proposition}

\usepackage{amsmath,amssymb,amsfonts}
\usepackage{algorithmic}
\usepackage{multirow}
\usepackage{textcomp}
\usepackage{xcolor}\usepackage[dvipsnames]{xcolor}
\usepackage{url}
\usepackage{ tipa }
\usepackage[range-phrase=--,per-mode=symbol-or-fraction,binary-units=true,range-units=single,list-units=single,detect-all]{siunitx}
\usepackage[capitalize,noabbrev]{cleveref}
\crefformat{footnote}{#2\footnotemark[#1]#3}

\usepackage{booktabs}
\usepackage{url}
\RequirePackage{xstring}
\RequirePackage{xparse}
\RequirePackage[]{acro}
\NewDocumentCommand\acrodef{mO{#1}mG{}}{\DeclareAcronym{#1}{short={#2}, long={#3}, #4}}
\usepackage[ruled,linesnumbered]{algorithm2e}
\usepackage{siunitx}

\DeclareUnicodeCharacter{0301}{\'{e}}
\acrodef{mmWave}{millimeter-wave}
\acrodef{ISAC}{integrated sensing and communications}
\acrodef{PMCW}{phase-modulated continuous waveform}
\acrodef{CO-PMCW}{code-orthogonal PMCW}
\acrodef{CSO-PMCW}{code-space-orthogonal PMCW}
\acrodef{FMCW}{frequency-modulated continuous waveform}
\acrodef{OFDM}{orthogonal frequency-division multiplexing}
\acrodef{MIMO}{multiple-input multiple-output}
\acrodef{V2X}{vehicle-to-everything}
\acrodef{PAPR}{peak-to-average power ratio}
\acrodef{SINR}{signal-to-interference-plus-noise ratio}
\acrodef{RDM}{range-Doppler map}
\acrodef{SISO}{single-input single-output}
\acrodef{CRLB}{Cramer-Rao lower bound}
\acrodef{ULA}{uniform linear array}
\acrodef{UE}{user equipment}
\acrodef{PRNS}{pseudo-random noise sequence}
\acrodef{AoD}{angle of departure}
\acrodef{AoA}{angle of arrival}
\acrodef{AWGN}{additive white Gaussian noise}
\acrodef{RCS}{radar cross section}
\acrodef{RSU}{roadside unit}
\acrodef{LoS}{line-of-sight}
\acrodef{NLoS}{non-line-of-sight}
\acrodef{FFT}{fast Fourier transform}
\acrodef{IFFT}{inverse fast Fourier transform}
\acrodef{PRI}{pulse repetition interval}
\acrodef{i.i.d.}{independent and identically distributed}
\acrodef{MU-MIMO}{multi-user MIMO}
\acrodef{FIM}{Fisher information matrix}
\acrodef{IIoT}{intelligent internet of things}
\acrodef{LTE}{long-term evolution}
\acrodef{NR}{new radio}
\acrodef{C-RAN}{cloud radio access network}
\acrodef{RRA}{radio resource allocation}
\acrodef{MSE}{mean squared error}
\acrodef{VA}{virtual array}
\acrodef{EFIM}{equivalent Fisher information matrix}
\acrodef{TDM}{time-division multiplexing}
\acrodef{RMS}{root mean square}
\acrodef{FDM}{frequency-division multiplexing}
\acrodef{CDM}{code-division multiplexing}
\acrodef{OTFS}{orthogonal time frequency space}
\acrodef{PEB}{position error bound}
\acrodef{VEB}{velocity error bound}
\acrodef{MMS}{multi-monostatic sensing}
\acrodef{MBS}{multi-bistatic sensing}
\acrodef{MXS}{multi-X-static hybrid sensing}
\acrodef{UAV}{unmanned aerial vehicle}
\acrodef{BS}{base station}
\acrodef{SNR}{signal-to-noise ratio}
\acrodef{eMBB}{enhanced mobile broadband}
\acrodef{MUSIC}{multiple signal classification}
\acrodef{CS}{compressive sensing}
\acrodef{FE}{fine estimation}
\acrodef{Tx}{transmitter}
\acrodef{Rx}{receiver}
\acrodef{TRx}{transceiver}
\acrodef{DL}{downlink}
\acrodef{UP}{uplink}
\acrodef{DoF}{degree of freedom}
\acrodef{ESNR}{energy SNR}
\acrodef{BPSK}{binary phase shift keying}
\acrodef{RC}{raised cosine}
\acrodef{ICI}{inter-carrier interference}
\acrodef{RRC}{root-raised cosine}
\acrodef{CP}{cyclic prefix}
\acrodef{ISFFT}{inverse symplectic finite Fourier transform}
\acrodef{SFFT}{symplectic finite Fourier transform}
\acrodef{DD}{delay-Doppler}
\acrodef{BTDM}{block TDM}
\acrodef{ITDM}{interleaved TDM}
\acrodef{SIMO}{single-input multiple-output}
\acrodef{BFDM}{block FDM}
\acrodef{CFDM}{comb FDM}
\acrodef{MRT}{maximum ratio transmission}
\acrodef{ISI}{inter-symbol interference}
\acrodef{RF}{radio frequency}

\usepackage[caption=false,font=footnotesize]{subfig}
\usepackage[pdftex]{graphicx}
\DeclareGraphicsExtensions{.pdf,.png,.jpg,.tikz}
\usepackage{csquotes}
\usepackage[backend=biber,style=ieee,doi=false,isbn=false,sorting=none,sortcites=true,mincitenames=1,maxcitenames=2]{biblatex}
\DeclareFieldFormat{sentencecase}{#1} % never apply sentence casing, even if bibtex field is unprotected
\DeclareFieldFormat{titlecase}{#1} % never apply title casing, even if bibtex field is unprotected
\usepackage{xpatch}
\xpatchbibmacro{textcite}{\addspace}{\addnbspace}{}{}
\xpatchbibmacro{Textcite}{\addspace}{\addnbspace}{}{}
\DefineBibliographyStrings{english}{
	andothers = et~al\adddot\addspace
}

\graphicspath{{./Bilder/}}    
\begin{document}

\title{On Unified CRLB Framework from Generic Signals to ISAC Waveforms with Virtual Array Sensing}

\author{%
Yanpeng Su,~\IEEEmembership{Graduate Student Member,~IEEE}, Norman Franchi,~\IEEEmembership{Member,~IEEE},\\and Maximilian Lübke,~\IEEEmembership{Member,~IEEE}
\thanks{The work contributes to the research within the 6G-Valley innovation cluster. \textit{(Corresponding author: Yanpeng Su.)}}
\thanks{Yanpeng Su, Norman Franchi, and Maximilian Lübke are with the Institute for Smart Electronics and Systems, Friedrich-Alexander-Universität Erlangen-Nürnberg, 91058 Erlangen, Germany (email: yanpeng.su@fau.de; norman.franchi@fau.de; maximilian.luebke@fau.de).}
\thanks{Color versions of one or more of the figures in this article are available online at http://ieeexplore.ieee.org.}
}
\markboth{ }
{Su \textit{et al.}: On Unified CRLB Framework from Generic Signals to ISAC Waveforms with Virtual Array Sensing}
%{Su \textit{et al.}: Cooperative ISAC Networks: When Is the Tractable CRLB Accurate for Joint Position and Velocity Estimation?}

\maketitle

\begin{abstract}
This paper presents a unified Cramér–Rao lower bound (CRLB) framework for signal-level parameters in integrated sensing and communications (ISAC)-enabled radar systems.
Starting from the generic signal model, we analyze the coupling between delay and Doppler in the Fisher information matrix (FIM), which is unsolved and often overlooked in relevant studies.
Addressing this issue, we derive the conditions under which the coupling terms can be eliminated and demonstrate that these conditions are typically satisfied for ISAC-enabled waveforms. 
Afterward, the CRLBs of representative ISAC waveforms are derived within the unified framework, enabling consistent and comparable analysis across the waveforms and avoiding model-dependent discrepancies. %Moreover, the delay-Doppler decoupling provides a tractable CRLB as a metric for system optimization.
Further, the framework is extended to virtual array (VA) sensing systems, where the impact of different multiplexing schemes is analyzed. 
Simulation results demonstrate the consistency between the CRLBs derived from the proposed framework and those obtained from waveform-specific analyses. The proposed framework shows strong generality, waveform-compatibility, and flexibility, offering a versatile tool for the CRLB analysis of various waveforms, including those lacking existing analytical results.
\end{abstract}

\begin{IEEEkeywords}
\text{ }6G, Cramér–Rao lower bound (CRLB), Fisher information matrix (FIM), integrated sensing and communications (ISAC), multiple-input multiple-output (MIMO), radar sensing, virtual array
\end{IEEEkeywords}

\section{Introduction}
Radar sensing is envisioned to be integrated into the upcoming 6G mobile networks since it allows the networks to perceive the environments, accelerating a wide range of applications, including intelligent transportation systems, smart home and cities, and \ac{IIoT} \cite{10536135,persp}. 
%Meanwhile, sidelink-based sensing is also expected to play a crucial role in 6G networks, since they enable the vehicles to be aware of their surroundings and thereby accelerate the development of autonomous driving and efficient traffic management \cite{9127852,10772915}. 
Since the deployment of the sensing function potentially leads to increased cost and resource scarcity, the coexistence and cooperation between sensing and communication systems becomes a particularly important issue. In this context, \ac{ISAC} has been widely discussed in recent years and is expected to be a key technology of 6G \cite{10781422,10561589}. By allowing the communication and sensing functions to share the spectral and hardware, ISAC improves resource efficiency and reduces hardware and energy costs \cite{9924202,9737357}. 

%Although the 4G \ac{LTE} and 5G \ac{NR} networks already provide localization services for active devices like smartphones \cite{9354629}, the above-mentioned applications require future 6G wireless networks to be able to sense both active and passive objects \cite{9591277}. 

While the performance bounds for communication systems are well established from 2G to 5G, the fundamental limitations of radar sensing have been widely discussed in recent years. 
Performance bound analysis is the prerequisite for integrating sensing into mobile networks.
On the one hand, it offers guidance for networks in the geometrical deployment and parameter configuration. On the other hand, it serves as an important metric in \ac{RRA} and beamforming and clustering design. 
In this context, \ac{CRLB} provides a fundamental limitation on sensing accuracy in terms of radio resource and \ac{SNR}. It calculates the theoretical lower bound on the \ac{MSE} of any unbiased estimator \cite{1420803}. 

In radar systems, CRLB is generally used to evaluate the achievable estimation accuracy of the signal-level parameters, including amplitude, phase, delay (range), Doppler (radial velocity), and \ac{AoA}, and the state parameters, including position and velocity. Accurate CRLB estimation relies on the joint analysis of all unknown parameters in the \ac{FIM}, since ignoring coupling between parameters can lead to inaccurate analysis. The CRLB of the parameters of interest can be calculated by the \ac{EFIM}, which is obtained as the Schur complement with respect to the nuisance parameters in the FIM, thereby capturing their effect on the parameters of interest. This work addresses gaps in signal-level CRLB and the impact of \ac{VA} and provides a unified framework for CRLB analysis. From the perspective of position and velocity estimation, range, velocity, and AoA are considered as the parameters of interest, while the amplitude and phase are nuisance parameters. 

\subsection{Related Works}
Related studies of signal-level CRLB cover the analysis of the generic signal model and specific waveforms, as well as the extension to the \ac{VA}-enabled systems. 
The study in \cite{923295} provides a comprehensive CRLB estimation for the signal-level parameters, where the coupling between delay and Doppler in the FIM is described in a non-closed form, which is demonstrated to significantly increase the difficulty of calculating the CRLB. 
On the contrary, the AoA is decoupled from other parameters.
Studies in \cite{7418183,7347470} analyze the CRLB using the Taylor series and demonstrate that the delay and Doppler CRLBs are respectively proportional to the square of bandwidth and frame length, while the closed-form expression is missing. In \cite{5494404,6034675}, the authors designed a bistatic radar channel selection algorithm based on the range and radial velocity CRLBs, where the influence of nuisance parameters is not considered. Power allocation and beamforming designs employing CRLB as the sensing metric can be found in \cite{9652071,10978368,10097000}, where the narrowband signal model is usually considered, abandoning range information and the estimation of radial velocity. In summary, an accurate, closed-form, and tractable CRLB of the generic signal models is missing. 

The coupling between delay and Doppler in generic signal models is still not solved. Recent studies concentrate more on the analysis of specific waveforms. The CRLB of \ac{FMCW} is provided in \cite{7472072,9489355}. 
Studies in \cite{10414289,9695370} discuss the CRLB of \ac{PMCW} signals, while a closed-form and tractable expression for a wideband PMCW signal is missing.
Besides, it is worth noting that the shaped pulses can influence the shapes of the chips and the spectrum of PMCW, potentially impacting the CRLB, which is not addressed yet. 
The closed-form CRLB of \ac{OFDM} radar can be found in \cite{braun2014ofdm,11231051,10706865}. 
Further, \cite{10373881} extends the application scenario of OFDM radar to sidelink position estimation. 
The CRLB of \ac{OTFS} is analyzed in \cite{8757044}, where the provided formula of CRLB shows a high complexity, while the simulation result demonstrates that the CRLBs of OTFS and OFDM radars are almost the same.
The impact of VA technology on the CRLB of AoA is discussed in \cite{li2021thinned,7300572,6638411}. However, since VA generally relies on signal multiplexing schemes such as \ac{TDM}, \ac{FDM}, and \ac{CDM}, the reduced resource and non-coherent transmission scheme may impact the CRLBs, whose effect has not been investigated.

\subsection{Contributions}
Compared to waveform-specific analyses, a generic CRLB provides a unified framework that encompasses different waveform-specific cases as special instances, enabling systematic and fair performance comparison across signal designs and providing an efficient tool for CRLB analysis of waveforms whose results are lacking. 
To this end, it is shown to be beneficial to solve the coupling issues between delay and Doppler and provide a tractable and unified approach for CRLB analysis as a common benchmark for various waveforms. Besides, the resulting tractable formula provides a useful metric for optimization problems.
Further, VA is a favorable technology in collocated MIMO radar since it improves spatial resolution and target separability, while its impact on the CRLB has not been thoroughly investigated.

This paper addresses these gaps by providing a unified CRLB framework for signal-level parameters.
%Starting from the FIM of the generic signal model, we analyze the conditions under which the delay-Doppler coupling can be eliminated or approximately neglected. Particularly, we investigate the coupling issue under different waveforms and derive their CRLBs under the unified framework. Afterward, we extend the framework to systems with VA and various multiplexing schemes, and the corresponding changes in FIM and CRLB are analyzed. 
%Simulation focuses on comprehensive performance evaluation and demonstration of the consistency between the CRLBs derived from the proposed framework and those obtained in waveform-specific analyses.
The main contributions of this paper are summarized as follows:
\begin{itemize}
    \item \textbf{Generic signal-level analysis and conditions for decoupling:} Starting from the generic signal model, we first review the derivation of signal-level FIM with all parameters. %including delay, Doppler, AoA, amplitude, and phase. %The EFIM of the parameters of interest is then obtained by the Schur complement with respect to the nuisance parameters. 
    Addressing the coupling term between delay and Doppler, we present the conditions under which they can be eliminated or approximately neglected.
    The generic signal-based analysis and proposed decoupling conditions are compatible with a wide range of waveforms, requiring only the substitution of their specific structures.
    This leads to a unified CRLB framework that provides a general representation capable of capturing diverse waveform characteristics.
    \item \textbf{Waveform analysis:} We address four representative ISAC waveforms in this work: FMCW, PMCW with various shaped pulses, OFDM, and OTFS. The conditions for decoupling under these waveforms are analyzed, and their CRLBs are calculated based on the generic signal model with the integration of their characteristics. For OFDM and OTFS, we show that 
    the CRLB derived from the proposed framework has consistent expression with those obtained from waveform-specific analyses in related works, demonstrating its correctness. For PMCW, we fill the gap in its closed-form CRLBs and analyze the behavior of different shaped pulses.
    \item \textbf{Impact of VA and multiplexing:} We analyze the changes in CRLB produced by VA and multiplexing schemes. First, we reconstruct a collocated MIMO radar transmission model and demonstrate that the overall FIM is the sum of the individual FIMs corresponding to each multiplexed transmit signal. Afterward, the CRLBs with different multiplexing schemes are derived. The result indicates that the non-coherent transmission and multiplexing are the dominant factors resulting in differences from phased array radars. 
    \item \textbf{Performance evaluation and correctness verification:} The simulation results demonstrate the consistency between the proposed CRLB calculation approach and that obtained by waveform-specific analyses. Based on the results, the generality, flexibility, and waveform-compatibility of the presented framework are explained. %The proposed unified framework provides an efficient tool for the CRLB analysis of various waveforms.
\end{itemize}

The rest of this paper is organized as follows: Section~\ref{sec.model} introduces the system model. The FIM and CRLB are analyzed in Section~\ref{sec.crlb}, followed by the analysis of decoupling conditions and the investigation of different waveforms. Section~\ref{sec.VA} extends the framework to VA sensing with different multiplexing schemes. The numerical results are given in Section~\ref{sec.simulation}. Section~\ref{sec.conclusion} summarizes this work.

\textit{Notations:}  $(\cdot)^{*}$, $(\cdot)^{T}$, $(\cdot)^{H}$, and $(\cdot)^{-1}$ denote the conjugate, transpose, conjugate transpose, and inverse, respectively. $a$ and $A$ represent scalars, $\mathbf{a}$ and $\mathbf{A}$ represent vectors and matrices, respectively. 
The diagonalization calculation of vector $\mathbf{a}$ is written by $\mathrm{diag}$($\mathbf{a}$). $\odot$ and $\otimes$ denote the Hadamard product and Kronecker product.
%$\mathbf{A}\odot \mathbf{B}$ and $\mathbf{A}\otimes \mathbf{B}$ imply Hadamard product and outer product of $\mathbf{A}$ and $\mathbf{B}$. 
%$\mathbf{A}[n,m]$ implies the component at the $n$-th row and $m$-th column of matrix $\mathbf{A}$. 
%$\mathbf{A}(m\downarrow)$ and $\mathbf{A}(n\rightarrow)$ express the $m$-th column and the $n$-th row of matrix $\mathbf{A}$, respectively. 
$\mathbb{Z},\ \mathbb{N},\ \mathbb{R},$ and $\mathbb{C}$ express the sets of integers, natural numbers, real numbers, and complex numbers. $\mathcal{R}\{\cdot\}$ and $\mathcal{I}\{\cdot\}$ extract the real and imaginary parts.
$\mathcal{CN}(\mu,\sigma^2)$ denotes the complex Gaussian distribution with mean and variance of $\mu$ and $\sigma^2$. $s(t)\leftrightarrow S(f)$ implies Fourier transform. $\dot{s}(t)=ds(t)/dt$ denotes the derivative of $s(t)$. For simplification, $\int_{-\infty}^\infty(\cdot)dt$ is abbreviated as $\int(\cdot)dt$.

\section{System Model}\label{sec.model}
This work exemplarily showcases a traffic scenario, where the planar position and velocity are involved. 
Hence, we consider a multiple-antenna radar system where the \ac{Tx} and \ac{Rx} antennas are \acp{ULA} with sizes of $N_\text{T}$ and $N_\text{R}$, respectively, and their inter-element spacing is denoted by $d_\text{T}$ and $d_\text{R}$.
%The CRLB of radar sensing with phased array antennas is analyzed in this section and Section~\ref{sec.crlb}, while the results of \ac{VA} sensing are discussed in Section~\ref{sec.VA}. For phased-array antennas, 
A common configuration is $d_\text{T}=d_\text{R}=\lambda/2$.
Beginning with the generic signal model $s(t)$, the received signal at the Rx is given by
\begin{align}
    \mathbf{r}(t)=\underbrace{Ae^{j\phi}\mathbf{a}_{\text{R}}^*(\theta_{\text{R}}){s}(t-\tau)e^{j2\pi f_\text{D}t}}_{\pmb{\mu}(t)}+\mathbf{n}(t),\label{eq.rec}
\end{align}
where $A$, $\phi$, $\tau$, and $f_\text{D}$ denote the amplitude, phase, delay, and Doppler shift. $A=a\mathbf{a}_{\text{T}}^H(\theta_{\text{T}})\mathbf{w}=N_\text{T}a$ combines the path loss and beamforming gain, where the \ac{MRT} with $\mathbf{w}=\mathbf{a}_{\text{T}}(\theta_{\text{T}})$ is considered in this work. \Ac{AoD} $\theta_{\text{T}}$ is not included in FIM analysis since it does not introduce new \ac{DoF} of independent observation as $\mathbf{a}_{\text{T}}^H(\theta_{\text{T}})\mathbf{w}$ is scalar and cannot lead to additional observation dimensions. $\mathbf{a}_{\text{R}}(\theta_{\text{R}})$ denotes the Rx steering vector. $\mathbf{n}(t)\sim\mathcal{CN}(0,\sigma^2)^{N_\text{R}\times1}$ represents \ac{AWGN} whose power is assumed to be the same for all Rx antenna elements.

The likelihood and log-likelihood functions of the signal-level parameters $\pmb{\Theta}=[A,\phi,\tau,f,\theta_\text{R}]^T$ are given by 
\begin{align}
    &f(\mathbf{r}|\pmb{\Theta})=\frac{1}{(\pi\sigma^2)^{N_\text{R}}}\exp\Big(-\frac{1}{\sigma^2}\int||\mathbf{r}(t)-\pmb{\mu}(t)||^2dt\Big)\\
    &\log f(\mathbf{r}|\pmb{\Theta})=-\frac{1}{\sigma^2}\int||\mathbf{r}(t)-\pmb{\mu}(t)||^2dt,
\end{align}
where the constant term $-N_\text{R}\log(\pi\sigma^2)$ in log-likelihood function is omitted since its derivative to $\pmb{\Theta}$ is 0.

\section{Derivation of CRLB}\label{sec.crlb}
This section addresses the signal-level CRLB in radar systems.
The FIM of the parameters $\pmb{\Theta}$ based on the generic signal model was addressed in previous works like \cite{923295}, whereas the coupling between range and radial velocity is not solved. In this section, we first review this issue and then analyze the conditions under which the coupling disappears. Afterward, the CRLBs of different waveforms are derived based on the obtained generic model.

The FIM of $\pmb{\Theta}$ is calculated by
\begin{align}
    \mathbf{F}_{i,j}=-\mathbb{E}\Big\{\frac{\partial^2\log f(\mathbf{r}|\pmb{\Theta})}{\partial{\Theta}_i\partial{\Theta}_j}\Big\}.
\end{align}

The derivation of FIM $\mathbf{F}$ and EFIM $\mathbf{E}$ is provided in Appendix~\ref{sec.SLF}. The resulting EFIM is given in (\ref{eq.efim}),
\begin{figure*}\begin{align}
    \mathbf{E}=\begin{bmatrix}
        2N_\text{R}\gamma\Big(4\pi^2B_\text{rms}^2+\frac{C_0^2}{E_\text{s}^2}\Big) & \frac{4\pi A^2N_\text{R}\mathcal{I}\{C_1\}}{\sigma^2} & 0\\
        \frac{4\pi A^2N_\text{R}\mathcal{I}\{C_1\}}{\sigma^2} & 8\pi^2N_\text{R}\gamma T_\text{rms}^2 &0\\
        0 & 0 & \frac{\pi^2\cos^2(\theta_\text{R})N_\text{R}(N_\text{R}^2-1)\gamma}{6}
    \end{bmatrix}.\label{eq.efim}
\end{align}\end{figure*}
where \ac{ESNR} $\gamma=\frac{A^2P_\text{s}T_\text{F}}{\sigma^2}=\frac{A^2E_\text{s}}{\sigma^2}$ denotes the overall energy ratio between signal and AWGN at each Rx antenna element, and $P_\text{s}$ is the average signal power. $B_\text{rms}$ and $T_\text{rms}$ are the \ac{RMS} bandwidth and time:
\begin{align}
    B_\text{rms}^2=\frac{\int f^2|S(f)|^2df}{\int |S(f)|^2df},\quad T_\text{rms}^2=\frac{\int t^2|s(t)|^2dt}{\int |s(t)|^2dt},\label{eq.rms}
\end{align}
where $S(f)$ is the spectrum of $s(t)$. For signals with uniformly distributed power over bandwidth $B$ and frame $T_\text{F}$, $B_\text{rms}^2={B^2}/{12}$, $T_\text{rms}^2={T_\text{F}^2}/{12}$, where $B$ and $T_\text{F}$ represent the signal bandwidth and frame length.

The undesired terms $C_0$ and $C_1$ are given by
\begin{align}
    &C_0=\int s^*(t-\tau)\dot{s}(t-\tau)dt\label{eq.c0},\\
    &C_1=\int t\dot{s}^*(t-\tau)s(t-\tau)dt\label{eq.c1}.
\end{align}
%Note $\dot{s}(t-\tau)$ denotes the derivative of $s(t-\tau)$ to $\tau$ instead of $t$
The CRLB is calculated by extracting the diagonal elements of $\mathbf{E}^{-1}$, i.e.,
\begin{align}
    \mathbf{C}=\mathbf{E}^{-1},\ C_\tau=\mathbf{C}_{1,1},\ C_{f_\text{D}}=\mathbf{C}_{2,2},\ C_{\theta_\text{R}}=\mathbf{C}_{3,3},
\end{align}
where ${\theta_\text{R}}$ is decoupled from other parameters and independent of waveforms:
\begin{align}
    C_{\theta_\text{R}}=\frac{6}{\pi^2\cos^2(\theta_\text{R})N_\text{R}(N_\text{R}^2-1)\gamma}.
\end{align}
Previous works \cite{923295,7418183,7347470,rs12182913} have provided the EFIM in (\ref{eq.efim}) and the expression of $C_0$ and $\mathcal{I}\{C_1\}$ in (\ref{eq.c0}) and (\ref{eq.c1}). However, a closed-form solution for $C_0$ and $\mathcal{I}\{C_1\}$ is still not provided, which is a prerequisite for deriving a clear and tractable expression of $C_\tau$ and $C_{f_\text{D}}$. A tractable expression is desired since it can offer a clear form of the transformation to state parameters (position and velocity) and can be employed as the target function in optimization problems, such as beamforming and \ac{RRA}.
The conditions under which $C_0$ and $\mathcal{I}\{C_1\}$ can be neglected are analyzed in this work:

\begin{proposition}\label{p.0}
    The undesired term $C_0$ vanishes when the signal power spectrum is an even function, i.e., symmetric about zero frequency, $|S(f)|^2=|S(-f)|^2$.
\end{proposition}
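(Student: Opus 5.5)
The plan is to move $C_0$ into the frequency domain with Parseval's theorem, where the derivative becomes multiplication by $j2\pi f$. Then $C_0$ turns into the first moment of the power spectrum, and that moment vanishes whenever the power spectrum is even.

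First, I will remove the delay. Substituting $u=t-\tau$ in (\ref{eq.c0}) gives
\begin{align*}
C_0=\int s^*(u)\dot{s}(u)\,du .
\end{align*}
The integral runs over the whole real line, so $C_0$ does not depend on $\tau$.

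Second, I will apply Parseval's theorem. We have $s(u)\leftrightarrow S(f)$ and $\dot{s}(u)\leftrightarrow j2\pi f S(f)$. Hence
\begin{align*}
\int s^*(u)\dot{s}(u)\,du=\int S^*(f)\,j2\pi f\,S(f)\,df=j2\pi\int f|S(f)|^2\,df .
\end{align*}
This requires $s$ to have finite energy and $f S(f)$ to be square integrable, i.e., a finite RMS bandwidth. That assumption is already implicit in (\ref{eq.rms}), since $B_\text{rms}$ is defined there.

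Third, I will use the symmetry assumption. If $|S(f)|^2=|S(-f)|^2$, then $f|S(f)|^2$ is an odd function. Its integral over the whole line is therefore zero, so $C_0=0$. The integral converges absolutely because $|f||S(f)|^2\le (1+f^2)|S(f)|^2$, which is integrable by finite energy and finite $B_\text{rms}$.

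As a sanity check and remark, the same computation shows that $C_0$ is always purely imaginary. Indeed, $C_0=j2\pi E_\text{s}\bar f$, where $\bar f$ is the spectral centroid, and so $C_0^2/E_\text{s}^2=-4\pi^2\bar f^2$. The first entry of (\ref{eq.efim}) then becomes $8\pi^2N_\text{R}\gamma(B_\text{rms}^2-\bar f^2)$, the centered RMS bandwidth. This confirms that symmetry about zero frequency, which makes the centroid vanish, is exactly the condition that removes $C_0$.

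The main obstacle is only technical: justifying the derivative–Fourier correspondence and the convergence of the integrals. I will handle this with the finite-bandwidth assumption and not belabor it.
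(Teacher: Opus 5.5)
Your proposal is correct and follows the paper's proof: you remove $\tau$ by substitution, use $\dot{s}(t)\leftrightarrow j2\pi fS(f)$ with Parseval's theorem to get $C_0=j2\pi\int f|S(f)|^2df$, and conclude from the oddness of the integrand; your integrability justification and your rewriting of the first EFIM entry as $8\pi^2N_\text{R}\gamma(B_\text{rms}^2-\bar f^2)$ are correct extras. One small overstatement in your closing remark: the exact condition for $C_0=0$ is a vanishing spectral centroid, and an even power spectrum is only a sufficient case of it.
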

\begin{proof}
    Since $\dot{s}(t)\leftrightarrow j2\pi f S(f)$, according to the Parseval's theorem, we have
    \begin{align}
        &C_0=\int s^*(t-\tau)\dot{s}(t-\tau)dt=\int s^*(u)\dot{s}(u)du\nonumber\\
        &=\int j2\pi fS^*(f)S(f)df=j2\pi \int f|S(f)|^2df.
    \end{align}
    If the power spectrum $|S(f)|^2$ is symmetric about $f=0$, $C_0=0$.
\end{proof}

\begin{proposition}\label{p.1}
    For signals $s(t)$ with power spectrum symmetric about zero frequency, the coupling term $\mathcal{I}\{C_1\}$ vanishes if the derivative of the phase of $S(f)$ is an even function of $f$, or equivalently, the derivative of the phase of $s(t)$ is symmetric to the energy centroid in the time domain. 
\end{proposition}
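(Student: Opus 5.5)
The plan is to reduce $C_1$ to a delay-independent integral, use Proposition~\ref{p.0} to remove the residual term, and then evaluate the imaginary part in polar form in the frequency domain, where the parity conditions make the integrand odd.

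\emph{Step 1: remove $\tau$.} Substituting $u=t-\tau$ in (\ref{eq.c1}) gives
\begin{align}
C_1=\int u\,\dot{s}^*(u)s(u)\,du+\tau\int \dot{s}^*(u)s(u)\,du=\int u\,\dot{s}^*(u)s(u)\,du+\tau C_0^* .
\end{align}
By Proposition~\ref{p.0}, the symmetric power spectrum gives $C_0=0$, so the second term drops. It then suffices to show that $\mathcal{I}\{\int u\,\dot{s}^*(u)s(u)\,du\}=0$.

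\emph{Step 2: move to the frequency domain.} I would use the Fourier pairs $u\,s(u)\leftrightarrow \frac{j}{2\pi}S'(f)$ and $\dot{s}(u)\leftrightarrow j2\pi fS(f)$. Parseval's theorem then gives
\begin{align}
\int u\,s(u)\dot{s}^*(u)\,du=\int \tfrac{j}{2\pi}S'(f)\,\big(j2\pi fS(f)\big)^*df=\int f\,S'(f)S^*(f)\,df .
\end{align}
Writing $S(f)=|S(f)|e^{j\Phi(f)}$, we have $S'S^*=|S|\,\frac{d|S|}{df}+j\Phi'(f)|S(f)|^2$. Hence
\begin{align}
\mathcal{I}\{C_1\}=\int f\,\Phi'(f)\,|S(f)|^2\,df .
\end{align}
If $|S(f)|^2$ is even and $\Phi'(f)$ is even, the integrand is odd and the integral vanishes. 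I expect the main care to lie in the sign and conjugation bookkeeping here. Any error only flips the overall sign and does not affect vanishing. I will also assume enough decay for the boundary terms and the pairings to be valid.

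\emph{Step 3: the time-domain equivalent.} Write $s(u)=|s(u)|e^{j\psi(u)}$, so that $\dot{s}^*s=|s|\frac{d|s|}{du}-j\psi'(u)|s(u)|^2$. This gives
\begin{align}
\mathcal{I}\{C_1\}=-\int u\,\psi'(u)|s(u)|^2\,du .
\end{align}
The same polar computation on $C_0$ shows that its imaginary part is $\int\psi'|s|^2du$, which is zero by Proposition~\ref{p.0}. Therefore, for any energy centroid $t_c=\int u|s|^2du/E_\text{s}$, we may subtract $t_c\int\psi'|s|^2du=0$ to obtain
\begin{align}
\mathcal{I}\{C_1\}=-\int (u-t_c)\,\psi'(u)|s(u)|^2\,du .
\end{align}
This vanishes when $\psi'(u)|s(u)|^2$ is symmetric about $t_c$, which is the stated time-domain condition with the energy distribution taken as centered.

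The main obstacle is making \emph{equivalently} precise, since the two conditions are not literally equivalent for arbitrary signals. I would present the time-domain form as the dual reading of the frequency-domain criterion under these symmetry assumptions: instantaneous frequency $\psi'/2\pi$ and group delay $-\Phi'/2\pi$ play dual roles. I would not attempt a strict if-and-only-if proof.
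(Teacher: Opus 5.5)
Your proposal is correct and follows essentially the same route as the paper. Like the paper, you remove the $\tau C_0^*$ term via Proposition~\ref{p.0}, then write $\mathcal{I}\{C_1\}$ in polar form, as $\int f\,\Phi'(f)|S(f)|^2df$ in frequency via Parseval and as $-\int u\,\psi'(u)|s(u)|^2du$ in time, and conclude by parity. Your extra step of subtracting $t_c\,\mathcal{I}\{C_0\}=t_c\int\psi'|s|^2du=0$ slightly sharpens the paper's time-domain argument: the paper only places the energy centroid at the origin, whereas the parity argument really needs the product $\psi'|s|^2$, not just $\dot\psi$, to be symmetric about the centroid.
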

\begin{proof}
    For any signal $s(t)$, it can be expressed by $s(t)=p(t)e^{j\psi(t)}$, where $p(t)$ and $\psi(t)$ respectively represent the amplitude and phase of $s(t)$. $C_1$ can be reformulated as
    \begin{align}
        &C_1=\int t\dot{s}^*(t-\tau)s(t-\tau)dt=\int (u+\tau)\dot{s}^*(u)s(u)du\nonumber\\
        &=\negthinspace\int\negthinspace u(\dot{p}(u)\negthinspace-\negthinspace jp(u)\dot{\psi}(u))e^{-j\psi(u)}p(u)e^{j\psi(u)}du\negthinspace-\negthinspace\int\negthinspace \tau C_0\nonumber\\
        &\overset{\text{Symmetric }|S(f)|^2}{=}\int (u\dot{p}(u)p(u)-ju{p}^2(u)\dot{\psi}(u))du,
    \end{align}
    \begin{align}
        \mathcal{I}\{C_1\}=-\int u{p}^2(u)\dot{\psi}(u)du=-\int u|s(u)|^2\dot{\psi}(u)du.
    \end{align}
    As mentioned in Appendix~\ref{sec.SLF}, the power centroid is shifted to the origin. If $\dot{\psi}(u)$ is an even function, we have $\mathcal{I}\{C_1\}=0$.

    Similarly, in the frequency domain, since $ts(t)\leftrightarrow\frac{j}{2\pi}\dot{S}(f)$, according to the Parseval's theorem, $C_1$ can be expressed as
    \begin{align}
        &C_1=\int u\dot{s}^*(u)s(u)du=\int fS^*(f)\dot{S}(f)df,\\
    &\mathcal{I}\{C_1\}=\int f|S(f)|^2\dot{\Psi}(f)df.
    \end{align}
    If the derivative of phase in spectrum $\dot{\Psi}(f)$ is an even function, we have $\mathcal{I}\{C_1\}=0$.
\end{proof}

Generally, the condition in \textbf{Proposition}~\ref{p.0} holds since baseband signals usually have (approximately) symmetric power spectrum, while the condition in \textbf{Proposition}~\ref{p.1} is more complicated. A special case is for real signals, since $\psi(t)\in\{0,\pi\}$, $\dot{\psi}(t)=0$ always holds except for zero crossing points, where the contribution is 0 due to the zero amplitude. In the following, we analyze the values of $C_0$ and $\mathcal{I}\{C_1\}$ with representative ISAC waveforms, derive their CRLBs based on the generic signal model, and verify the correctness in the comparison with waveform-specific results.
%Different from related waveform-specific studies, where the CRLB is derived from specific signal structure, we directly include the waveform characteristics into the generic signal model, benefiting from a simpler and directly comparable analytical framework across different waveforms. 
%Since the signal power of the discussed waveforms is uniformly distributed, $T_0=T_\text{F}/2$ is considered in the following.

\subsection{FMCW}
The derivation of FMCW CRLB from the generic signal model has been provided in \cite{923295,9489355} and is briefly reviewed here. The transmit signal is given by
\begin{align}
    s(t)=\sum_{k=0}^{K-1}x_ke^{j\pi\mu(t-kT+T_0)^2}\mathrm{rect}\Big(\frac{t-kT+T_0}{T}\Big),
\end{align}
where $\mu=B/T$ denotes the ratio between bandwidth and \ac{PRI}. $x_k$ is the carried data.
$K=T_\text{F}/T$ represents the number of PRIs in a radar frame. $T_0=T_\text{F}/2$ is defined in Appendix~\ref{sec.SLF}. $C_0$ and $\mathcal{I}\{C_1\}$ are calculated by
\begin{align}
    &\negthinspace C_0\negthinspace\approx\negthinspace\int \negthinspace\sum_{k=0}^{K-1}\negthinspace 2j\pi\mu |x_k|^2(t\negthinspace-\negthinspace kT\negthinspace+\negthinspace{T_0})\mathrm{rect}\Big(\frac{t\negthinspace-\negthinspace kT+T_0}{T}\Big)dt\nonumber\\
    &\overset{u=t-kT+T_0}{=}2j\pi\mu\sum_{k=0}^{K-1}|x_k|^2\int_{-\frac{T}{2}}^{\frac{T}{2}}  udu=0,
\end{align}
\begin{align}
    &\mathcal{I}\{C_1\}=-\int t|s(t)|^2\dot{\psi}(t)dt\nonumber\\
    &=-2\pi\mu\sum_{k=0}^{K-1}|x_k|^2\int t\,\mathrm{rect}\Big(\frac{t-kT+T_0}{T}\Big)\Big(t-kT+T_0\Big)dt\nonumber\\
    &=-2\pi\mu\sum_{k=0}^{K-1}|x_k|^2\int_{-\frac{T}{2}}^{\frac{T}{2}} u(u+kT-T_0)dt\nonumber\\
    &=-\frac{\pi\mu T^3}{6}\sum_{k=0}^{K-1} |x_k|^2=-\frac{\pi\mu T^2E_\text{s}}{6},
\end{align}
where the derivative of the rectangular function at the edges is neglected for analytical tractability.

Substitute $C_0$ and $\mathcal{I}\{C_1\}$ into (\ref{eq.efim}), the CRLBs of delay and Doppler are calculated by
\begin{align}\label{eq.crlbfmcw}
    &C_\tau=\frac{3}{2\pi^2N_\text{R}\gamma B^2\Big(1-\frac{1}{K^2}\Big)}\overset{K\gg1}{\approx}\frac{3}{2\pi^2N_\text{R}\gamma B^2},\nonumber\\
    &C_{f_\text{D}}=\frac{3}{2\pi^2N_\text{R}\gamma T_\text{F}^2\Big(1-\frac{1}{K^2}\Big)}\overset{K\gg1}{\approx}\frac{3}{2\pi^2N_\text{R}\gamma T_\text{F}^2},
\end{align}
where the condition $K\gg1$ generally holds for the functionality of radial velocity estimation.

\subsection{PMCW}
To our best knowledge, a closed-form expression for PMCW CRLB is still lacking, while the impact of shaped pulses on PMCW has not been discussed.
The PMCW signal with \ac{PRNS} length of $L$ is written as
\begin{align}
    &s(t)=\sum_{k=0}^{K-1}\sum_{l=0}^{L-1}x_kb_lg(t-lT_\text{c}-kT+T_0), \\
    &S(f)=G(f)\sum_{k=0}^{K-1}\sum_{l=0}^{L-1}x_kb_le^{-j2\pi f(lT_\text{c}+kT-T_0)}\\
    &\negthinspace|\negthinspace S\negthinspace(f)\negthinspace|\negthinspace^2\negthinspace=\negthinspace|\negthinspace G\negthinspace(f)\negthinspace|^2\Big|\negthinspace\sum_{k=0}^{K-1}\negthinspace\sum_{l=0}^{L-1}\negthinspace x_kb_le^{\negthinspace-j2\pi f(lT_\text{c}+kT)}\negthinspace\Big|^2\negthinspace=\negthinspace|G(f)|^2|B(f)|^2\negthinspace,\negthinspace
\end{align}
where $g(t)$ denotes the shaped pulse. 
$x_k$ and $b_l$ represent the embedded communication data and the chip of PRNS. $T_\text{c}$ is the chip period.
The power spectrum of common pulses, such as rectangular, sinc, \ac{RRC}, and \ac{RC} impulses, is even. And $|B(f)|^2$ is also even since
\begin{align}
   &|B(f)|^2=\sum_{k,k'=0}^{K-1}\sum_{l,l'=0}^{L-1}x_kx_{k'}^*b_lb_{l'}^*e^{-j2\pi f((l-l')T_\text{c}+(k-k')T)}\nonumber\\
    &%=\negthinspace\sum_{k,k'=0}^{K-1}\sum_{l,l'=0}^{L-1}b_kb_{k'}^*b_lb_{l'}^*e^{-j2\pi f((l'-l)T_\text{c}+(k'-k)T)}
    =|B(-f)|^2\negthinspace,
\end{align}
where for each pair of unequal $k=a,k'=b$, there always exists another pair of $k=b,k'=a$, the same for $l$ and $l'$. %Hence $|B(f)|^2=|B(-f)|^2$ holds.
According to \textbf{Proposition}~\ref{p.0}, $C_0=0$ for PMCW signals. 

The \acp{PRNS} such as m-sequences, Gold codes, Kasami codes, etc., are binary, and the communication data symbols in PMCW ISAC systems are generally \ac{BPSK}-modulated to maintain the characteristics of binary radar systems \cite{9627227}, thus $x_k,b_l=\pm1,\,\forall k,l$, the condition $\psi(t)=\dot{\psi}(t)=0$ always holds for common shaped pulses. According to \textbf{Proposition}~\ref{p.1}, $\mathcal{I}\{C_1\}=0$.

However, while $T_\text{rms}^2=T_\text{F}^2/12$ holds, $B_\text{rms}^2$ changes with various shaped pulses. For randomly generated $x_k$ and $b_l$, $\mathbb{E}\{x_kb_lx_{k'}^*b_{l'}^*\}=\delta[l-l']\delta[k-k']$, where $\delta[\cdot]$ denotes the Kronecker delta function.
Consequently, $|B(f)|^2$ approximates flat when $K,L\gg1$ due to the law of large numbers:
\begin{align}
    &|B(f)|^2\overset{K,L\gg1}{\approx}\mathbb{E}\{|B(f)|^2\}\nonumber\\
    &\negthinspace=\negthinspace\sum_{k,k'\negthinspace=0}^{K-1}\sum_{l,l'\negthinspace=0}^{L-1}\negthinspace\mathbb{E}\negthinspace\{x_kx_{k'}^*\}\mathbb{E}\negthinspace\{b_lb_{l'}^*\}e\negthinspace^{-j2\pi f((l-l')T_\text{c}+(k-k')T)}\negthinspace=\negthinspace KL.\negthinspace
\end{align}
Therefore, $B_\text{rms}^2$ is only influenced by $|G(f)|^2$, yielding $B_\text{rms}^2=B_{\text{rms},G(f)}^2$. The RMS bandwidths of common shaped pulses are derived in Appendix~\ref{sec.brms}. The delay and Doppler CRLBs of PMCW are then calculated by:
\begin{align}
    &C_\tau=\begin{cases}
        \frac{4.448}{\pi^2N_\text{R}\gamma B^2}=\frac{0.4507}{N_\text{R}\gamma B^2}, & \text{Rect.}\\
        \frac{3}{2\pi^2N_\text{R}\gamma B^2}, & \text{Sinc}\\
        \frac{3}{2\pi^2N_\text{R}\gamma B^2}\negthinspace\cdot\negthinspace\frac{\pi^2(1+\alpha)^2}{(3\pi^2-24)\alpha^2+\pi^2}, & \text{RRC}\\
        \frac{3}{2\pi^2N_\text{R}\gamma B^2}\negthinspace\cdot\negthinspace\frac{\pi^2(4-\alpha)(1+\alpha)^2}{(6-\pi^2)\alpha^3+(12\pi^2-96)\alpha^2-3\pi^2\alpha+4\pi^2}, & \text{RC}
    \end{cases}\nonumber\\
    &C_{f_\text{D}}=\frac{3}{2\pi^2N_\text{R}\gamma T_F^2},\label{eq.pmcwf}
\end{align}
where $\alpha$ denotes the roll-off factor.

\subsection{OFDM}
The OFDM spectrum and the signal with \ac{CP} are given by
\begin{align}
    &\negthinspace S(f)\negthinspace=\negthinspace T\negthinspace\sum_{k=0}^{K-1}\negthinspace\sum_{l=0}^{L-1}\negthinspace X_{l,k}\mathrm{sinc}\Big(\frac{f+f_0}{\Delta f}-l\Big)e^{-j2\pi f(kT_\text{s}-T_0)}\negthinspace.\negthinspace\label{eq.ofdmsf}\\
    &\negthinspace s(t)\negthinspace=\negthinspace\sum_{k=0}^{K-1}\negthinspace\sum_{l=0}^{L-1}\negthinspace X\negthinspace_{l,k}e\negthinspace^{j2\pi{(l\Delta\negthinspace f\negthinspace-\negthinspace f_0)\negthinspace(t\negthinspace-\negthinspace kT\negthinspace_\text{s}\negthinspace+\negthinspace T\negthinspace_0\negthinspace-\negthinspace T\negthinspace_\text{cp}\negthinspace+\negthinspace T)}}\negthinspace\mathrm{rect}\negthinspace \Big(\negthinspace \frac{t\negthinspace-\negthinspace kT\negthinspace_\text{s}\negthinspace+\negthinspace T_0\negthinspace}{T_\text{cp}}\negthinspace \Big)\negthinspace   \nonumber\\
    &\negthinspace+\negthinspace\sum_{k=0}^{K-1}\negthinspace\sum_{l=0}^{L-1}\negthinspace X\negthinspace_{l,k}e\negthinspace^{j2\pi{(l\Delta\negthinspace f\negthinspace-\negthinspace f_0)\negthinspace(t\negthinspace-\negthinspace kT\negthinspace_\text{s}\negthinspace+\negthinspace T\negthinspace_0\negthinspace-\negthinspace T\negthinspace_\text{cp})}}\negthinspace\mathrm{rect}\negthinspace \Big(\negthinspace \frac{t\negthinspace-\negthinspace kT\negthinspace_\text{s}\negthinspace+\negthinspace T\negthinspace_0\negthinspace-\negthinspace T\negthinspace_\text{cp}\negthinspace}{T}\negthinspace \Big)\negthinspace.\negthinspace\label{eq.ofdmst}
\end{align}
where $\Delta f=1/T=B/L$ denotes the subcarrier distance, $f_0=\Delta f(L-1)/2$ guarantees that the frequency center is 0. $L$ in OFDM implies the number of subcarriers. $T_\text{s}=T+T_\text{cp}$ is the OFDM symbol length, $T_\text{cp}=TL_\text{cp}/L$ denotes the \ac{CP} length. 
%The \ac{CP} is omitted since it is removed before Rx radar signal estimation. 
The data symbols $X_{l,k}\in\mathbf{X}$ are assumed to be \ac{i.i.d.} %(or at least second-order uncorrelated with zero-mean and identical variance) 
with zero mean and variance of $P_X$, yielding $\mathbb{E}\{X_{k,l}X_{k',l'}^*\}=P_X\delta[k-k']\delta[l-l']$, which is a standard assumption in communication-enabled signals. 

The mathematical expectation of the power spectrum is symmetric due to the symmetric subcarriers
\begin{align}
    %&|S(f)|^2=T^2\sum_{k,k'=0}^{K-1}\sum_{l,l'=0}^{L-1}X_{l,k}X_{l',k'}^*\mathrm{sinc}\Big(\frac{f}{\Delta f}-l+\frac{L-1}{2}\Big)\nonumber\\
    %&\qquad\qquad\cdot\mathrm{sinc}\Big(\frac{f}{\Delta f}-l'+\frac{L-1}{2}\Big)e^{-j2\pi (k-k')T}\\
    &\negthinspace\mathbb{E}\{\negthinspace|S(f)|^2\negthinspace\}\negthinspace=\negthinspace T^2\negthinspace\sum_{k,k'=0}^{K-1}\sum_{l,l'=0}^{L-1}\negthinspace\mathbb{E}\{X_{l,k}\negthinspace X_{\negthinspace l',k'\negthinspace}^*\}\mathrm{sinc}\Big(\frac{f+f_0}{\Delta f}\negthinspace-\negthinspace l\Big)\nonumber\\
    &\qquad\qquad\qquad\cdot\mathrm{sinc}\Big(\frac{f+f_0}{\Delta f}-l'\Big)e^{-j2\pi f(k-k')T_\text{s}}\nonumber\\
    %&={T^2}\sum_{k=0}^{K-1}\sum_{l=0}^{L-1}P_X\mathrm{sinc}^2\Big(\frac{f+f_0}{\Delta f}-l\Big)\nonumber\\
    &={T^2}KP_X\sum_{l=0}^{L-1}\mathrm{sinc}^2\Big(\frac{f+f_0}{\Delta f}-l\Big),\\
    &\negthinspace\mathbb{E}\{|S(-f)|^2\}\negthinspace=\negthinspace {T^2}\negthinspace K\negthinspace P_X\negthinspace\sum_{l=0}^{L-1}\negthinspace\mathrm{sinc}\negthinspace^2\negthinspace\Big(\negthinspace\frac{f-f_0}{\Delta f}\negthinspace+\negthinspace l\Big)\negthinspace=\negthinspace\mathbb{E}\{\negthinspace|S(f)|^2\negthinspace\}.\negthinspace
\end{align}
In the general case, such as 5G-based mobile networks, $K,L\gg1$. Following the law of large numbers, $|S(f)|^2\approx\mathbb{E}\{|S(f)|^2\}$, i.e., the power spectrum is approximately symmetric. According to \textbf{Proposition}~\ref{p.0}, $C_0\approx0$. 

Since the OFDM signal is complex, a closed-form of $\dot{\psi}(t)$ in \textbf{Proposition}~\ref{p.1} cannot be obtained. Hence, $\mathcal{I}\{C_1\}$ is derived in a closed-form here.
However $\mathcal{I}\{C_1\}$ can be solved with zero mean and i.i.d. $X_{l,k}$ and large $K,L$:
\begin{align}
    %&\dot{s}^*(t)=-\frac{1}{L}\sum_{k=0}^{K-1}\sum_{l=0}^{L-1}X_{l,k}^*j2\pi (l\Delta f-f_0)e^{-j2\pi{(l\Delta f-f_0)(t-kT)}}\mathrm{rect}\Big(\frac{t-kT}{T}\Big)\\
    &\mathbb{E}\{\mathcal{I}\{C_1\}\negthinspace \}\negthinspace =\negthinspace -{j2\pi}\negthinspace \int \negthinspace t\negthinspace \sum_{k,k'=0}^{K-1}\sum_{l,l'=0}^{L-1}\negthinspace \mathbb{E}\{X_{l,k}^*X_{l',k'}\} (l\Delta f\negthinspace -\negthinspace f_0)\nonumber\\
    &\cdot\negthinspace\Big(\negthinspace e\negthinspace^{ - j2\pi{(l\Delta\negthinspace f - f_0)\negthinspace (t\negthinspace - kT_\text{s}+T_0-T_\text{cp}+T)}} \negthinspace e\negthinspace^{j2\pi{(l'\negthinspace \Delta\negthinspace f - f_0)\negthinspace(t - k'T_\text{s}+T_0-T_\text{cp}+T)}}\negthinspace\nonumber\\
    &\cdot \text{rect}\big( \frac{t - kT_\text{s}+ T_0}{T_\text{cp}} \big) \text{rect}\big( \frac{t -  k'T_\text{s}+ T_0}{T_\text{cp}} \big)\nonumber\\
    &+e^{ - j2\pi{(l\Delta f - f_0) (t\negthinspace - kT_\text{s}+T_0-T_\text{cp})}}  e^{j2\pi{(l' \Delta f - f_0)(t - k'T_\text{s}+T_0-T_\text{cp})}}\nonumber\\
    &\cdot \text{rect}\big( \frac{t - kT_\text{s}+ T_0-T_\text{cp}}{T} \big) \text{rect}\big( \frac{t -  k'T_\text{s}+ T_0-T_\text{cp}}{T} \big)\Big)
    dt\nonumber
    \\
    &=\negthinspace-{j2\pi P_X}\negthinspace\sum_{k=0}^{K-1}\negthinspace\sum_{l=0}^{L-1}\negthinspace (l\Delta f-f_0)\negthinspace\int\negthinspace t\,\text{rect}^2\Big(\frac{t-kT_\text{s}+T_0}{T_\text{s}}\Big)=0,
\end{align}
where the cross terms between the useful signal and CP are canceled by time isolation.
$\sum_{l=0}^{L-1} (l\Delta f-f_0)=0$ and time centralization guarantee the final result is 0. 
In summary, when $K,L\gg1$ and the elements in $\mathbf{X}$ are zero-mean i.i.d. or more generally, uncorrelated with zero-mean and identical variance, $C_0\approx0$ and $\mathcal{I}\{C_1\}\approx\mathbb{E}\{\mathcal{I}\{C_1\}\}=0$. This property is inherent in modern mobile networks. 

The power is uniformly distributed in the time and frequency domains, hence $B_\text{rms}^2=B^2/12$, $T_\text{rms}^2=T_\text{F}^2/12$, where $T_\text{F}=KT_\text{s}$ instead of $KT$ due to the inserted CP. The CRLB of OFDM can be calculated by
\begin{align}
    &C_\tau\approx\frac{3}{2\pi^2N_\text{R}\gamma B^2},\ C_{f_\text{D}}\approx\frac{3}{2\pi^2N_\text{R}\gamma T_F^2}.\label{eq.ofdmcrlb}
\end{align}

Considering the multi-carrier characteristic, another perspective on calculating the RMS bandwidth and time is by leveraging the structure in the discrete domain, i.e.,
\begin{align}
    &\int f^2|S(f)|^2df= \sum_{k=0}^{K-1}\sum_{l=0}^{L-1}|X_{k,l}|^2\Big(\big(l-\frac{L-1}{2}\big)\Delta f\Big)^2\nonumber\\
    &\approx \frac{KL(L^2-1)\Delta f^2P_X}{12},\nonumber\\
    &\int t^2|s(t)|^2dt\approx \sum_{k=0}^{K-1}\sum_{l=0}^{L-1}|X_{k,l}|^2\Big(\big(k-\frac{K-1}{2}\big)T_\text{s}\Big)^2\nonumber\\
    &\approx \frac{KL(K^2-1)T_\text{s}^2P_X}{12},
\end{align}
where the approximation neglects the cross terms between $X_{l,k}$ and $X_{l',k'}$. This results in the same expression as that derived in waveform-specific studies \cite{11231051,10706865}:
\begin{align}\label{eq.crlbofdmnew}
    &C_\tau'\approx\frac{3}{2\pi^2N_\text{R}\gamma_XKL(L^2-1)\Delta f^2}=\frac{L^2}{L^2-1} C_\tau,\nonumber\\ &C_{f_\text{D}}'\approx\frac{3}{2\pi^2N_\text{R}\gamma_XKL(K^2-1)T_\text{s}^2}=\frac{K^2}{K^2-1} C_{f_\text{D}},
\end{align}
where $\gamma_X=P_X/\sigma^2=\frac{\gamma}{KL}$ denotes the SNR per symbol. The analyses in the continuous and discrete domains show a high consistency. The derivation based on the generic model avoids the high complexity of the waveform-specific approach, which requires computing and summing each $X_{l,k}$ and eliminating coupling effects.

\subsection{OTFS}\label{sec.otfs}
The OTFS signal has the same expression as OFDM in (\ref{eq.ofdmsf})-(\ref{eq.ofdmst}), whereas $\mathbf{X}$ is the \ac{ISFFT} of the \ac{DD} domain symbols $\mathbf{x}$, i.e.,
\begin{align}
    &X_{l,k}=\frac{1}{\sqrt{KL}}\sum_{\nu=0}^{K-1}\sum_{\mu=0}^{L-1}x_{\mu,\nu}e^{j2\pi\big(\frac{k\nu}{K}-\frac{l\mu}{L}\big)},% \mathbf{X}=\mathbf{F}_L\mathbf{x}\mathbf{F}_K^H,
\end{align}
%where $\mathbf{F}_L=\frac{1}{\sqrt{L}}\Big[e^{-j2\pi\frac{\mu\nu}{L}}\Big]_{0\le\mu\le L-1\atop0\le\nu\le L-1}$ denotes the \ac{FFT} matrix, and $\mathbf{F}_K^H$ denotes the \ac{IFFT} matrix. 
where the ISFFT maps the symbols $\mathbf{x}$ from the DD domain to the time-frequency domain.
Note that the communication symbols are carried by $\mathbf{x}$, and the elements in $\mathbf{X}$ are hence not independent of each other.
However, the DD domain symbols $\mathbf{x}$ are i.i.d. with zero mean, hence 
\begin{align}
    &\mathbb{E}\{X_{k,l}X_{k',l'}^*\}\nonumber\\
    &\negthinspace=\negthinspace\frac{1}{\negthinspace K L\negthinspace}\negthinspace\sum_{\nu,\nu'=0}^{K-1}\negthinspace\sum_{\mu,\mu'=0}^{L-1}\negthinspace\mathbb{E}\{\negthinspace x_{\mu,\nu}x_{\mu',\nu'}^*\negthinspace\}e^{j2\pi\big(\frac{k\nu}{K}-\frac{l\mu}{L}\big)}\negthinspace e^{-j2\pi\big(\frac{k'\nu'}{K}-\frac{l'\mu'}{L}\big)}\nonumber\\
    &\negthinspace=\negthinspace\frac{1}{\negthinspace K L\negthinspace}\negthinspace\sum_{\nu=0}^{K-1}\negthinspace\sum_{\mu=0}^{L-1}\negthinspace P\negthinspace_xe^{j2\pi\frac{(k-k')\nu}{K}}\negthinspace e^{-j2\pi\frac{(l-l')\mu}{L}}\negthinspace=\negthinspace P_x\delta[k\negthinspace-\negthinspace k']\delta[l\negthinspace-\negthinspace l'].\negthinspace
\end{align}
Consequently, although the elements of $\mathbf{X}$ are not i.i.d., they are uncorrelated with zero mean and identical variance.
%The following derivation process is the same as OFDM. 
Therefore, when $K,L\gg1$ and the elements in $\mathbf{x}$ are zero-mean i.i.d. or more generally, second-order uncorrelated with zero mean and identical variance, the CRLB of OTFS has the same expression as OFDM in (\ref{eq.ofdmcrlb}) and (\ref{eq.crlbofdmnew}). 

It is worth noting that the CRLBs in this work are derived based on the signal characteristics, including RMS bandwidth and time, power spectrum, ESNR, etc., and therefore reflect the fundamental limitation determined by the waveform itself. 
In contrast, the study on OTFS \cite{8757044} explicitly accounts for the impact of \ac{ICI} and \ac{ISI} on OTFS, which leads to different expressions compared to OFDM, although the simulation results show that their CRLBs are almost identical.
From our perspective, effects such as ICI and ISI are relevant to signal processing schemes and can be suppressed by various approaches \cite{11151207,10319769,9925198,9299695}, hence the CRLB is expected to characterize the fundamental accuracy limitation of the waveform, rather than implementation-dependent performance.
Since the OTFS signal is obtained by ISFFT of the DD domain symbols $\mathbf{x}$ followed by OFDM modulation, the resulting time-frequency domain symbols $\mathbf{X}$ have the same second-order statistics as those of OFDM, resulting in the same CRLB expression under identical resource allocation. This conclusion is consistent with the simulation result in \cite{8757044}.

%hence the condition $\mathbb{E}\{X_{k,l}X_{k',l'}^*\}=P_X\delta(k-k')\delta(l-l')$ is equivalent to
%\begin{align}
%    \mathbb{E}\{\mathbf{X}\mathbf{X}^H\}=\mathbb{E}\{\mathbf{F}_L\mathbf{x}\mathbf{F}_K^H\mathbf{F}_K\mathbf{x}^H\mathbf{F}_L^H\}=P_x,
%\end{align}
%which holds since the FFT and IFFT matrices are unitary. 

In summary, this section provides the derivation of CRLBs for different waveforms based on the generic signal model, and it is demonstrated that $C_0$ and $\mathcal{I}\{C_1\}$ can be (approximately) eliminated. As a result, the matrix of CRLB is written as
\begin{align}
    \mathbf{C}=\mathbf{E}^{-1}=\mathrm{diag}([C_\tau,C_{f_\text{D}},C_{\theta_\text{R}}]).
\end{align}
This form excludes the impact of coupling elements and is more tractable for transformation to state parameters and optimization designs (RRA, beamforming, etc.).

\section{Impact of Virtual Array}\label{sec.VA}
VA is used in collocated MIMO radar for spatial resolution improvement and requires extending the inter-element spacing of either the Tx or the Rx antenna array. Previous studies \cite{li2021thinned,7300572,6638411} primarily focus on the improved AoA CRLB, while the impact of different multiplexing schemes on delay and Doppler CRLBs and the changes in ESNR due to the non-coherent transmission are missing.

In this work, we assume the space extension at the Rx, i.e., $d_\text{R}=N_\text{T}\lambda/2=N_\text{T}d_\text{T}$, whereas the conclusion is the same as that with the extension at the Tx. 
We use $(n_\text{R},n_\text{T})$ to denote the element corresponding to the $n_\text{R}$-th Rx antenna and the $n_\text{T}$-th Tx antenna, and its index in the virtual array is $n_\text{v}=n_\text{T}+(n_\text{R}-1)N_\text{T}$, whose size is $N_\text{v}=N_\text{T}N_\text{R}$.
Without loss of generality, we further assume that the Tx antenna elements have the same transmission power, yielding $P_{n_\text{T}}=P_\text{s}$ and $E_{n_\text{T}}=E_\text{v}=\int |s_{n_\text{T}}(t)|^2dt,\forall n_\text{T}$.
VA requires multiplexing and orthogonal transmission signals:
\begin{align}
    \int s_{n_\text{T}}(t)s_{n_\text{T}'}^*(t)dt=E_\text{v}\delta[n_\text{T}-n_\text{T}'].\label{eq.othosig}
\end{align}

The received signal $\mathbf{r}(t)\in\mathbb{C}^{N_\text{v}\times1}$ can be expressed as
\begin{align}\label{eq.vart}
    \mathbf{r}(t)=ae^{j\phi}\mathbf{a}_{\text{R}}^*(\theta)\otimes(\mathbf{a}_{\text{T}}^*(\theta)\odot\mathbf{s}(t-\tau)e^{j2\pi f_\text{D}t})+\mathbf{n}(t),
\end{align}
where $\mathbf{s}(t)=[s_0(t),s_1(t),...,s_{N_\text{T}-1}(t)]^T$ implies multiplexed transmitter signal. $\theta$ equals AoD and AoA in monostatic radar. The orthogonal signals result in a non-coherent transmission scheme that cannot benefit from coherent gain of beamforming, hence the amplitude without beamforming $a=A/N_\text{T}$ defined in Section~\ref{sec.model} instead of $A$ is considered here.
(\ref{eq.vart}) shows a relatively untractable form than (\ref{eq.rec}). However, the received signal corresponding to the $n_\text{T}$-th Tx antenna element can be simply expressed as 
\begin{align}
    \mathbf{r}_{n_\text{T}}(t)=ae^{j\phi}\mathbf{a}_{\text{R}}^*(\theta){a}_{n_\text{T}}^*(\theta)s_{n_\text{T}}(t-\tau)e^{j2\pi f_\text{D}t}+\mathbf{n}_{n_\text{T}}(t),\label{eq.r_nt}
\end{align}
where $a_{n_\text{T}}(\theta)$ denotes the $n_\text{T}$-th element in $\mathbf{a}_\text{T}(\theta)$. (\ref{eq.r_nt}) is in the same form as (\ref{eq.rec}), hence its FIM, omitting the decoupled amplitude, can be written as
\begin{align}
    \mathbf{F}_{n_\text{T}}=\begin{bmatrix}
        F_{\phi\phi,n_\text{T}} & 0 & F_{\phi f_\text{D},n_\text{T}} & F_{\phi\theta,n_\text{T}}\\
        0 & F_{\tau\tau,n_\text{T}} & 0 & 0\\
        F_{\phi f_\text{D},n_\text{T}} & 0 & F_{f_\text{D}f_\text{D},n_\text{T}} & F_{f_\text{D}\theta_\text{R},n_\text{T}}\\
        F_{\phi\theta,n_\text{T}} & 0 & F_{f_\text{D}\theta_\text{R},n_\text{T}} & F_{\theta_\text{R}\theta_\text{R},n_\text{T}}
    \end{bmatrix},
\end{align}
with the condition of $C_0=\mathcal{I}\{C_1\}=0$.

\begin{proposition}\label{p.2}
    In a radar system with VA, the global FIM can be calculated by the sum of the local FIMs corresponding to each transmitter antenna elements, i.e., $\mathbf{F}^\text{v}=\sum_{n_\text{T}=0}^{N_\text{T}-1}\mathbf{F}_{n_\text{T}}$.
\end{proposition}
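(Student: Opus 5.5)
The plan is to write the log-likelihood of the full virtual-array observation (\ref{eq.vart}) and show that it splits into a sum of per-transmitter terms. The proof does not rely on the noise being separated physically per transmitter. It relies only on the orthogonality (\ref{eq.othosig}) and the white Gaussian noise model.

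First, write the $N_\text{v}\times1$ mean as $\pmb{\mu}^\text{v}(t)$, stacked from blocks indexed by $n_\text{v}=n_\text{T}+(n_\text{R}-1)N_\text{T}$. Regroup its entries by transmitter index, so that
\begin{align}
\pmb{\mu}^\text{v}(t)\cong[\pmb{\mu}_0^T(t),\dots,\pmb{\mu}_{N_\text{T}-1}^T(t)]^T,\qquad \pmb{\mu}_{n_\text{T}}(t)=ae^{j\phi}\mathbf{a}_\text{R}^*(\theta)a_{n_\text{T}}^*(\theta)s_{n_\text{T}}(t-\tau)e^{j2\pi f_\text{D}t}.\nonumber
\end{align}
This regrouping is just a permutation of the components of $\mathbf{r}(t)$, and a permutation leaves the Gaussian likelihood unchanged. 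With white noise, the FIM entries of a deterministic-mean complex Gaussian model take the Slepian--Bangs form
\begin{align}
\mathbf{F}^\text{v}_{i,j}=\frac{2}{\sigma^2}\mathcal{R}\Big\{\int\Big(\frac{\partial\pmb{\mu}^\text{v}(t)}{\partial\Theta_i}\Big)^H\frac{\partial\pmb{\mu}^\text{v}(t)}{\partial\Theta_j}dt\Big\},\nonumber
\end{align}
which is the same expression used in Appendix~\ref{sec.SLF} to obtain (\ref{eq.efim}). Since the inner product of stacked vectors is the sum of the blockwise inner products, the integrand splits into $\sum_{n_\text{T}}(\partial\pmb{\mu}_{n_\text{T}}/\partial\Theta_i)^H(\partial\pmb{\mu}_{n_\text{T}}/\partial\Theta_j)$. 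Each summand is exactly the FIM entry of model (\ref{eq.r_nt}), so $\mathbf{F}^\text{v}=\sum_{n_\text{T}}\mathbf{F}_{n_\text{T}}$ follows directly.

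The main obstacle is the physical interpretation of (\ref{eq.vart}). In reality each Rx antenna receives the superposition $\sum_{n_\text{T}}a_{n_\text{T}}^*s_{n_\text{T}}(t-\tau)$, not separate streams. To cover this case, I will model the received signal per Rx element as the sum and compute the Slepian--Bangs integral directly. This produces cross terms of the form $\int\partial_i(s_{n_\text{T}}(t-\tau)e^{j2\pi f_\text{D}t})^*\partial_j(s_{n_\text{T}'}(t-\tau)e^{j2\pi f_\text{D}t})dt$ for $n_\text{T}\neq n_\text{T}'$. The goal is to show these vanish, or are negligible. For the amplitude and phase derivatives, they vanish directly by (\ref{eq.othosig}). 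For the $\tau$ and $f_\text{D}$ derivatives they involve $\int s_{n_\text{T}}^*\dot{s}_{n_\text{T}'}$ and $\int t\,s_{n_\text{T}}^*s_{n_\text{T}'}$, and these need a separate argument for each multiplexing scheme:
\begin{itemize}
\item In TDM the supports are disjoint, so all cross terms vanish exactly.
\item In FDM the spectra are disjoint, so the $\dot{s}$ cross terms vanish by Parseval as in Proposition~\ref{p.0}; the $t$-weighted terms vanish by the dual argument as in Proposition~\ref{p.1}.
\item In CDM I will use the zero-mean, uncorrelated codes with $K,L\gg1$ and argue in expectation, as was done for OFDM.
\end{itemize}
Under these conditions the superposed model has the same FIM as the separated model, and the decomposition above then yields the proposition.
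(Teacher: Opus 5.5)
Your first step is the paper's argument in a different form. The paper splits $\int\|\mathbf{r}(t)-\pmb{\mu}(t)\|^2dt$ into the per-$n_\text{T}$ terms of (\ref{eq.llhsum}) and discards the cross-covariances of the scores, while you split the Slepian--Bangs integrand blockwise. Both are valid exactly when the $N_\text{v}$-dimensional noise in (\ref{eq.vart}) is white across the transmitter blocks. This step never uses (\ref{eq.othosig}). Whiteness of the stacked noise \emph{is} the statement that the noise is separated per transmitter, and justifying it is the real content of the paper's proof: disjoint time/frequency support for TDM/FDM, and the decorrelation (\ref{eq.othnoi}) of the decoded noise for CDM. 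So your opening claim that the proof does not rely on per-transmitter noise separation is false for step one.

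Your second step takes a genuinely different, dual route. You keep the physical superposition at each Rx element with a single noise of variance $\sigma^2$. The diagonal pairs $n_\text{T}=n_\text{T}'$ of the Slepian--Bangs integral then reproduce $\sum_{n_\text{T}}\mathbf{F}_{n_\text{T}}$, and you show the signal cross terms vanish. The strategy is sound and would make explicit where the decomposition is exact, but as written it has gaps.

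(i) The list of cross terms is incomplete. The $\tau\tau$, $\tau f_\text{D}$ and $f_\text{D}f_\text{D}$ entries produce $\int\dot{s}_{n_\text{T}}^*\dot{s}_{n_\text{T}'}dt$, $\int t\,\dot{s}_{n_\text{T}}^*s_{n_\text{T}'}dt$ and $\int t^2s_{n_\text{T}}^*s_{n_\text{T}'}dt$. Only the unweighted products follow from (\ref{eq.othosig}).

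(ii) For FDM, Proposition~\ref{p.1} concerns the phase of a single spectrum and says nothing about cross terms. By Parseval, $\int t\,s_{n_\text{T}}^*s_{n_\text{T}'}dt=\frac{j}{2\pi}\int S_{n_\text{T}}^*(f)\dot{S}_{n_\text{T}'}(f)df$, which vanishes only if the bands are separated by gaps. For contiguous BFDM blocks, the edge impulses of $\dot{S}_{n_\text{T}'}$ meet the nonzero neighbouring block. For CFDM-OFDM, $\int_0^Tt\,e^{j2\pi mt/T}dt=T^2/(j2\pi m)\neq0$ for $m\neq0$. So every symbol contributes a nonzero data-dependent sum of products $X_{l,k}^*X_{l',k}$ with $l\neq l'$, which cancels only in expectation over zero-mean uncorrelated data.

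(iii) For CDM the expectation argument is unavailable. The Hadamard outer codes are deterministic, and so is $|s(t)|^2=P_\text{s}$ for binary PMCW. Over slots of length $\beta T$ one gets $\int t\,c_{n_\text{T}}c_{n_\text{T}'}|s|^2dt=P_\text{s}K(\beta T)^2\sum_i i\,c_{n_\text{T}}^ic_{n_\text{T}'}^i$; constant time offsets drop out because $\sum_i c_{n_\text{T}}^ic_{n_\text{T}'}^i=0$. For $N_\text{T}=2$ this equals $-P_\text{s}K(\beta T)^2$, and it enters $F_{\phi f_\text{D}}$ with the generically nonzero weight $\mathcal{R}\{a_{n_\text{T}}(\theta)a_{n_\text{T}'}^*(\theta)\}$. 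It is negligible relative to the diagonal terms only for $K\gg1$, which needs a deterministic asymptotic argument rather than an expectation.

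Repaired along these lines, your route gives $\mathbf{F}^\text{v}=\sum_{n_\text{T}}\mathbf{F}_{n_\text{T}}$ exactly for TDM. For FDM it holds in expectation over the data, and for CDM only for $K\gg1$, so the conclusion must be stated with those qualifications.
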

\begin{proof}
    Given the parameter $\pmb{\Theta}$, if the observations of different $\mathbf{r}_{n_\text{T}}(t),\forall {n}_\text{T}\in[0,N_\text{T}-1]$ are independent of each other, the log-likelihood function can be written as the sum of individual log-likelihood terms, i.e., 
    \begin{align}\label{eq.llhsum}
        &\log p(\mathbf{r}|\pmb{\Theta})=-\frac{1}{\sigma^2}\int||\mathbf{r}(t)-\pmb{\mu}(t)||^2dt\nonumber\\
        &=-\frac{1}{\sigma^2}\sum_{n_\text{T}=0}^{N_\text{T}-1}\int||\mathbf{r}_{n_\text{T}}(t)-\pmb{\mu}_{n_\text{T}}(t)||^2dt=\sum_{n_\text{T}=0}^{N_\text{T}-1}\log p(\mathbf{r}_{n_\text{T}}|\pmb{\Theta}).
    \end{align}
    This requires that the noise terms $\mathbf{n}_{n_\text{T}}(t)\sim\mathcal{CN}(0,\sigma^2)^{N_\text{R}\times1}$ are independent of each other. Note that for Gaussian variables, uncorrelated means independent. %The uncorrelated property can be written as
    %\begin{align}\label{eq.othnoise}
    %    &\mathbb{E}\{\mathbf{n}_{n_\text{T}}(t)\mathbf{n}_{n_\text{T}'}^H(t)\}=\sigma^2\mathbf{I}_{N_\text{R}}\delta(n_\text{T}-n_\text{T}').
    %\end{align}
    
    For TDM and FDM, the AWGN terms $\mathbf{n}_{n_\text{T}}(t)$ with different $n_\text{T}$ are orthogonal in time or frequency, thus they are independent of each other.
    
    For CDM, although the signals are not separated in frequency or time, they are orthogonal in correlation due to the orthogonality of the applied codes, such as Hadamard code.
    After outer decoding, the noise terms are uncorrelated to each other since
    \begin{align}
        &\mathbb{E}\{\mathbf{n}_{n_\text{T}}(t)\mathbf{n}_{n_\text{T}'}^H(t)\}=\frac{1}{T_\text{F}}\int_{-T_\text{F}/2}^{T_\text{F}/2} c_{n_\text{T}}(t)c_{n_\text{T}'}^*(t)\mathbf{n}(t)\mathbf{n}^H(t)dt\nonumber\\
        &=\sigma^2\mathbf{I}_{N_\text{R}}\delta[n_\text{T}- n_\text{T}'],\label{eq.othnoi}
    \end{align}
    where $c_{n_\text{T}}(t)$ denotes the outer coding vector for the $n_\text{T}$-th signal, which is orthogonal to each other, i.e., $\int c_{n_\text{T}}(t)c^*_{n_\text{T}'}(t)dt= T_\text{F}\delta[n_\text{T}-n_\text{T}']$.
    In summary, (\ref{eq.llhsum}) holds for VA with different multiplexing schemes.

    Consequently, the FIM can be written as the sum of the individual FIMs:
    \begin{align}
        &\mathbf{F}_{ij}^\text{v}=\mathbb{E}\Big\{\frac{\partial \log p(\mathbf{r}|\pmb{\Theta})}{\partial\pmb{\Theta}_i}\frac{\partial \log p(\mathbf{r}|\pmb{\Theta})}{\partial\pmb{\Theta}_j}\Big\}\nonumber\\
        &=\sum_{n_\text{T}=0}^{N_\text{T}-1}\sum_{n_\text{T}'=0}^{N_\text{T}-1}\mathbb{E}\Big\{\frac{\partial \log p(\mathbf{r}_{n_\text{T}}|\pmb{\Theta})}{\partial\pmb{\Theta}_i}\frac{\partial \log p(\mathbf{r}_{n_\text{T}'}|\pmb{\Theta})}{\partial\pmb{\Theta}_j}\Big\}\nonumber\\
        &=\sum_{n_\text{T}=0}^{N_\text{T}-1}\mathbb{E}\Big\{\frac{\partial \log p(\mathbf{r}_{n_\text{T}}|\pmb{\Theta})}{\partial\pmb{\Theta}_i}\frac{\partial \log p(\mathbf{r}_{n_\text{T}}|\pmb{\Theta})}{\partial\pmb{\Theta}_j}\Big\}\nonumber\\
        &\quad+\sum_{n_\text{T}=0}^{N_\text{T}-1}\sum_{\substack{n_\text{T}'=0,\\ n_\text{T}'\ne n_\text{T}}}^{N_\text{T}-1}\mathbb{E}\Big\{\frac{\partial \log p(\mathbf{r}_{n_\text{T}}|\pmb{\Theta})}{\partial\pmb{\Theta}_i}\frac{\partial \log p(\mathbf{r}_{n_\text{T}'}|\pmb{\Theta})}{\partial\pmb{\Theta}_j}\Big\}\nonumber\\
        &=\sum_{n_\text{T}=0}^{N_\text{T}-1}\mathbf{F}_{n_\text{T}},
    \end{align}
    where the cross terms are eliminated since when $n_\text{T}\ne n_\text{T}'$, $\frac{\partial \log p(\mathbf{r}_{n_\text{T}}|\pmb{\Theta})}{\partial\pmb{\Theta}_i}$ and $\frac{\partial \log p(\mathbf{r}_{n_\text{T}'}|\pmb{\Theta})}{\partial\pmb{\Theta}_j}$ are with zero mean and uncorrelated.
\end{proof}

%For TDM and FDM, the signals $s_{n_\text{T}}$ with different $n_\text{T}$ are orthogonal in time or frequency, the observations of them are statistically independent, i.e., $p(\mathbf{r}|\pmb{\Theta})=\prod_{n_\text{T}=0}^{N_\text{T}-1}p(\mathbf{r}_{n_\text{T}}|\pmb{\Theta})$. 
%The global FIM can be calculated by the sum of $N_\text{R}$ local FIMs since the multiplexing guarantees the orthogonality between the transmission signals, i.e., $\mathbf{F}=\sum_{n_\text{T}=0}^{N_\text{T}-1}\mathbf{F}_{n_\text{T}}$, %Since $F_{\tau\tau,n_\text{T}}$ is decoupled in $\mathbf{F}_{n_\text{T}}$, we build $\mathbf{F}'_{n_\text{T}}$ by excluding the rows and columns of delay, 
Consequently, the elements of the resulting $\mathbf{F}^\text{v}$ is given by
\begin{align}
    \negthinspace &F_{\tau\tau}^\text{v}\negthinspace=\negthinspace\frac{2}{\sigma^2}\negthinspace a^2\negthinspace N_\text{R}\negthinspace\sum_{n_\text{T}=0}^{N_\text{T}-1}\negthinspace\int \negthinspace|\dot{s}_{n_\text{T}}(t\negthinspace-\negthinspace\tau)|^2dt\negthinspace\nonumber\\
    &=\frac{8\pi^2\negthinspace a^2 N_\text{R}}{\sigma^2}\sum_{n_\text{T}=0}^{N_\text{T}-1}{\int f^2|S_{n_\text{T}}(f)|^2df}=\negthinspace8\pi^2\negthinspace N_\text{R}\gamma_\text{v} \negthinspace\sum_{n_\text{T}=0}^{N_\text{T}-1}\negthinspace B_{\text{rms},n_\text{T}}^2\negthinspace.\negthinspace\label{eq.fttv}
\end{align}
\begin{align}\label{eq.fphiphiv}
     F_{\phi\phi}^\text{v}=\frac{2}{\sigma^2} N\negthinspace_\text{R}a ^2\sum_{n_\text{T}=0}^{N_\text{T}-1}\int  |{s}_{n_\text{T}}( t-\tau)|^2dt=2N_\text{v}\gamma_{\text{v}}.
\end{align}
\begin{align}\label{eq.fphifv}
     &F_{\phi f_\text{D}}^\text{v}=\frac{4}{\sigma^2}\pi a^2N_\text{R} \sum_{n_\text{T}=0}^{N_\text{T}-1}\int (t+T_0) |s_{n_\text{T}}(t-\tau)|^2 dt\nonumber\\
    &\overset{u=t-\tau}{=}%\negthinspace\frac{4}{\sigma^2}\pi a^2N_\text{R}\Big(\negthinspace(T_0+\tau) N_\text{T}E_{\text{v}}\negthinspace+\negthinspace\sum_{n_\text{T}=0}^{N_\text{T}-1}\negthinspace\int u|s_{n_\text{T}}(u)|^2du\negthinspace\Big)\nonumber\\
    4\pi N_\text{v}(T_0+\tau)\gamma_\text{v}.
\end{align}
\begin{align}
    &F_{\phi\theta}^\text{v}=-\frac{2}{\sigma^2}\mathcal{R}\Big\{ ja^2\sum_{n_\text{T}=0}^{N_\text{T}-1} (\mathbf{a}_\text{R}^T(\theta)\dot{\mathbf{a}}^*_\text{R}(\theta) +N_\text{R}a_{n_\text{T}}(\theta)\dot{a}^*_{n_\text{T}}(\theta))\nonumber\\&\qquad \cdot\int_{-\infty}^{\infty}|s_{n_\text{T}}(t-\tau)|^2dt\Big\}\nonumber\\
    &=\negthinspace-\frac{a^2\pi\negthinspace\cos(\theta)(\negthinspace N\negthinspace_\text{v}\negthinspace-\negthinspace1\negthinspace)N\negthinspace_\text{v}E_{\text{v}}}{\sigma^2}\negthinspace=\negthinspace-\pi N\negthinspace_\text{v}(\negthinspace N\negthinspace_\text{v}\negthinspace-\negthinspace1\negthinspace)\negthinspace\cos(\theta)\gamma_\text{v}.\negthinspace
    \label{eq.phitheta}
\end{align}
\begin{align}
    &F_{f_\text{D}f_\text{D}}^\text{v}=\frac{8\pi^2a^2N_\text{R}}{\sigma^2}\sum_{n_\text{T}=0}^{N_\text{T}-1}\int (t+T_0)^2|s_{n_\text{T}}(t-\tau)|^2dt\nonumber\\
    &=\frac{8\pi^2 a^2 N_\text{R}}{\sigma^2}\sum_{n_\text{T}=0}^{N_\text{T}-1}\int (u^2+(T_0+\tau)^2)|s_{n_\text{T}}(u)|^2 du\nonumber\\
    &={8\pi^2  N_\text{R}}\gamma_\text{v}\Big((T_0+\tau)^2N_\text{T}+\sum_{n_\text{T}=0}^{N_\text{T}-1}T_{\text{rms},n_\text{T}}^2\Big).\label{eq.fffv}
\end{align}
\begin{align}
    & F_{f_\text{D}\theta}^\text{v}=-\frac{2}{\sigma^2}\mathcal{R}\Big\{j2\pi a^2(T_0+\tau)E_{\text{v}}\nonumber\\
    &\qquad\cdot\sum_{n_\text{T}=0}^{N_\text{T}-1}(\mathbf{a}_\text{R}^T(\theta)\dot{\mathbf{a}}^*_\text{R}(\theta) +N_\text{R}a_{n_\text{T}}(\theta)\dot{a}^*_{n_\text{T}}(\theta))\Big\}\nonumber\\
    &=-{2}\pi^2(T_0+\tau)N_\text{v}(N_\text{v}-1)\cos(\theta)\gamma_{\text{v}}.\label{eq.ftheta}
\end{align}
\begin{align}
    &F_{\theta\theta}^\text{v}
    \negthinspace=\negthinspace\frac{2}{\sigma^2}a^2E_{\text{v}}\mathcal{R}\Big\{\negthinspace\sum_{n_\text{T}=0}^{N_\text{T}-1}\negthinspace||(\dot{\mathbf{a}}_\text{R}(\theta)a_{n_\text{T}}(\theta)+{\mathbf{a}}_\text{R}(\theta)\dot{a}_{n_\text{T}}(\theta))||^2\negthinspace\Big\}\nonumber
    \\
    &=\frac{\pi^2\cos^2(\theta)\gamma_\text{v}(N_\text{v}-1)N_\text{v}(2N_\text{v}-1)}{3},\label{eq.thetatheta}
\end{align}
where $\gamma_\text{v}=a^2E_{\text{v}}/\sigma^2$ is the ESNR at each virtual antenna element. With time centralization, $\sum_{n_\text{T}=0}^{N_\text{T}-1}\int t|s_{n_\text{T}}(t)|^2dt=0$ still holds since for TDM, $\sum_{n_\text{T}=0}^{N_\text{T}-1}\int t|s_{n_\text{T}}(t)|^2dt=\int t|s(t)|^2dt$; for FDM and CDM, $\int t|s_{n_\text{T}}(t)|^2dt=0,\forall n_\text{T}$. Hence, the corresponding terms in $F^\text{v}_{\phi f_\text{D}},F^\text{v}_{f_\text{D}f_\text{D}},F^\text{v}_{f_\text{D}\theta}$ are not illustrated.

\subsection{TDM}
TDM guarantees the signals are orthogonal in the time domain.
\Ac{ITDM} is the most common TDM scheme in VA sensing. It allocates the transmission of different Tx antenna elements into time slots in a sequentially interleaved scheme. 
The $n_\text{T}$-th transmitted signal is given by
\begin{align}
    s_{n_\text{T}}(t)=\sum_{k=0}^{K-1}s(t)\text{rect}\Big(\frac{t-n_\text{T}T-kN_\text{T}T+T_0}{T}\Big).
\end{align}

Another TDM scheme is \ac{BTDM}, where the transmissions of different antennas are temporally compressed within a fixed sub-frame duration $T_\text{F}/N_\text{T}$, the $n_\text{T}$-th transmitted signal is given by
\begin{align}
    s_{n_\text{T}}(t)=s(t)\text{rect}\Big(\frac{t-n_\text{T}KT+T_0}{KT}\Big).
\end{align}

For TDM signals, the relationship between PRI $T$ and frame length $T_\text{F}$ is changed to $T_\text{F}=KN_\text{T}T$. 
Under the assumption that the average transmission power $P_\text{s}$ is not influenced, the signal energy and ESNR are reduced to $E_\text{v}=E_\text{s}/N_\text{T}$ and $\gamma_\text{v}=a^2E_\text{v}/\sigma^2=\gamma/N_\text{T}^3$ due to the reduced transmission time. The TDM signals are orthogonal and complementary in time, hence
\begin{align}
    &\sum_{n_\text{T}=0}^{N_\text{T}-1}\int |\dot{s}_{n_\text{T}}(t-\tau)|^2dt=\int |\dot{s}(t-\tau)|^2dt,\nonumber\\ &\sum_{n_\text{T}=0}^{N_\text{T}-1}\int |{s}_{n_\text{T}}(t-\tau)|^2dt=\int |{s}(t-\tau)|^2dt=E_\text{s},\nonumber\\
    &\sum_{n_\text{T}=0}^{N_\text{T}-1}\negthinspace T_{\negthinspace \text{rms},n_\text{T}}^2\negthinspace =\negthinspace \sum_{n_\text{T}=0}^{N_\text{T}-1}\negthinspace \frac{\negthinspace \negthinspace \int\negthinspace  t^2|{s}_{n_\text{T}}(t\negthinspace -\negthinspace \tau)|^2\negthinspace  dt}{\negthinspace \negthinspace \int\negthinspace  |{s}_{n_\text{T}}(t\negthinspace -\negthinspace \tau)|^2dt}\negthinspace =\negthinspace \frac{\negthinspace \int\negthinspace  t^2|s(t)|^2dt}{E_\text{v}}
    \negthinspace =\negthinspace  N_\text{T}T_{\text{rms}}^2.\negthinspace \label{eq.tcomort}
\end{align}
%The relationship between the elements of ITDM FIM $\mathbf{F}^\text{v}$ and that without VA is calculated by
%\begin{align}
%    & F_{\tau\tau}^\text{v}= F_{\tau\tau}/N_\text{T}^2,\, F_{\phi\phi}^\text{v}= F_{\phi\phi}/N_\text{T}^2,\, F_{\phi f}^\text{v}= F_{\phi f}/N_\text{T}^2,\nonumber\\
%    & F_{ f f}^\text{v}= F_{ f f}/N_\text{T}^2,
%\end{align}
Substituting $\gamma_\text{v}=\gamma/N_\text{T}^3$ and (\ref{eq.tcomort}) into \cref{eq.fttv,eq.fphiphiv,eq.fphifv,eq.phitheta,eq.fffv,eq.ftheta,eq.thetatheta}, the relationship between the CRLBs of TDM-VA and those without VA is derived as
\begin{align}
    &C_{\tau,\text{TDM}}=N_\text{T}^2{C_\tau},\ C_{f_\text{D},\text{TDM}}={N_\text{T}^2}{C_{f_\text{D}}},\nonumber\\
    &C_{\theta,\text{TDM}}=\frac{6}{\pi^2\cos^2(\theta)N_\text{v}(N_\text{v}^2-1)\gamma_\text{v}}\approx C_{\theta_\text{R}},\label{eq.tdmcrlb}
\end{align}
where the approximation holds for $N_\text{v}\gg1$. The degradation mainly comes from the non-coherent transmission scheme and reduced signal energy. While the delay and velocity CRLBs significantly increase, the result of AoA approximates that without VA. 

%In summary, for VA with TDM, if the multiplexed signals are complementary and orthogonal in time and the transmission power of each antenna element is equal and unchanged compared to the system without VA, the CRLB of delay, Doppler, and AoA can be expressed by (\ref{eq.tdmcrlb}), the reason is given in (\ref{eq.tcomort}). 

\subsection{FDM}
The signals are transmitted continuously and separated in the frequency domain, hence the time domain characteristics of $s_{n_\text{T}}(t)$ like $T_\text{rms}^2$ are the same as $s(t)$. For single-carrier waveforms like FMCW, the \ac{BFDM} is usually applied \cite{9627227}, where the signal spectrum of $s_{n_\text{T}}(t)$ is given by
\begin{align}
    S_{n_\text{T}}(f)=\sqrt{N_\text{T}}S(f)\text{rect}\Big(\frac{f-n_\text{T}{B/}{N_\text{T}}+B/2}{B/N_\text{T}}\Big),
\end{align}
where $\sqrt{N_\text{T}}$ exists under the assumption that the time domain power is still $P_\text{s}$. The signal energy and ESNR are then given by $E_\text{v}=E_\text{s}$ and $\gamma_\text{v}=\gamma/N_\text{T}^2$. 

For multi-carrier waveforms like OFDM and OTFS, the \ac{CFDM} with an interleaved subcarrier assignment scheme is usually applied to minimize the \ac{ICI}. The corresponding spectrum of $s_{n_\text{T}}(t)$ is given by 
\begin{align}
    S_{n_\text{T}}(f)\negthinspace=\negthinspace\sqrt{N_\text{T}}S(f)\sum_{l=0}^{L-1}\text{rect}\Big(\frac{f-n_\text{T}\Delta f-lN_\text{T}\Delta f+B/2}{\Delta f}\Big),
\end{align}
where $\Delta f=\frac{B}{LN_\text{T}}$ for FDM. In contrast to BFDM, this scheme avoids degradation of range resolution at the cost of a reduced maximum unambiguous range. 

The FDM signals are orthogonal and complementary in the frequency domain, hence
\begin{align}
    &\sum_{n_\text{T}=0}^{N_\text{T}-1}\int |{S}_{n_\text{T}}(f)|^2df= N_\text{T}\int |{S}(f)|^2df=N_\text{T}E_\text{s},\nonumber\\
    &\negthinspace \sum_{n_\text{T}=0}^{N_\text{T}-1}\negthinspace B_{\negthinspace\text{rms},n_\text{T}}^2\negthinspace=\negthinspace\sum_{n_\text{T}=0}^{N_\text{T}-1}\negthinspace\frac{\negthinspace \int\negthinspace f^2\negthinspace |\negthinspace  S_{n_\text{T}}\negthinspace (f)\negthinspace |\negthinspace ^2\negthinspace df}{\negthinspace \int\negthinspace  |S_{n_\text{T}}(f)|^2df}\negthinspace=\negthinspace\frac{N\negthinspace _\text{T}\negthinspace\int\negthinspace f^2\negthinspace |\negthinspace  S(f)\negthinspace |\negthinspace ^2\negthinspace df}{E_\text{s}}
    \negthinspace=\negthinspace N_\text{T}B_{\text{rms}}^2.\nonumber\\\negthinspace 
    &\negthinspace\sum_{n_\text{T}=0}^{N_\text{T}-1}\negthinspace T_{\text{rms},n_\text{T}}^2\negthinspace =\negthinspace\sum_{n_\text{T}=0}^{N_\text{T}-1}\negthinspace \frac{\negthinspace  \int\negthinspace  t^2|{s}_{n_\text{T}}( t\negthinspace -\negthinspace \tau)|^2\negthinspace  dt}{ \negthinspace \int\negthinspace  |{s}_{n_\text{T}}(t\negthinspace -\negthinspace \tau)|^2dt}\negthinspace %=\negthinspace \frac{\negthinspace N_\text{T}\negthinspace\int\negthinspace  t^2\negthinspace|s(t)|^2\negthinspace dt}{E_\text{s}}
    \negthinspace=\negthinspace\sum_{n_\text{T}=0}^{N_\text{T}-1}\negthinspace T_{\text{rms},n_\text{T}}^2 \negthinspace=\negthinspace  N\negthinspace  _\text{T}T_{\text{rms}}^2.\negthinspace\label{eq.fcomort}
\end{align}
Substituting $\gamma_\text{v}=\gamma/N_\text{T}^2$ and (\ref{eq.fcomort}) into \cref{eq.fttv,eq.fphiphiv,eq.fphifv,eq.phitheta,eq.fffv,eq.ftheta,eq.thetatheta}, the relationship between the CRLBs of FDM-VA and those without VA is derived as
\begin{align}
    &C_{\tau,\text{FDM}}=N_\text{T}{C_\tau},\ C_{f_\text{D},\text{FDM}}={N_\text{T}}{C_{f_\text{D}}},\nonumber\\
    &C_{\theta,\text{FDM}}=\frac{6}{\pi^2\cos^2(\theta)N_\text{v}(N_\text{v}^2-1)\gamma_\text{v}}\approx\frac{1}{N_\text{T}}C_{\theta_\text{R}},\label{eq.tdmcrlb}
\end{align}
where the improvement with a factor of $N_\text{T}$ compared to TDM comes from the continuous transmission.

\subsection{CDM}
CDM is primarily applied in PMCW-based systems. 
Different from TDM and FDM, the signals $s_{n_\text{T}}(t)$ occupy the whole time-frequency plane. The $n_\text{T}$-th transmitted signal is given by
\begin{align}
    s_{n_\text{T}}(t)=c_{n_\text{T}}(t)s(t).
\end{align}
Note the expression of the frame length is changed to $T_\text{F}=KN_\text{T}\beta T$, where $\beta$ is the repetition factor for avoiding \ac{ISI} \cite{11456233} in MIMO PMCW radar systems, which is implemented by discarding the first of each $\beta$ PRIs at the Rx, hence the ESNR is reduced to $\gamma_\text{v}=\frac{(\beta-1)\gamma}{\beta N_\text{T}^2}$.
$c_{n_\text{T}}(t)=\sum_{k=0}^{K-1}\sum_{i=0}^{N_\text{T}-1}c_{n_\text{T}}^i\text{rect}\Big(\frac{t-i \beta T-kN_\text{T}\beta T}{ \beta T}\Big)$ with $c_{n_\text{T}}^i=\mathbf{H}_\text{d}(n_\text{T},i)$ represents the outer coding signal based on the Hadamard matrix $\mathbf{H}_\text{d}$. %The outer coding signals are orthogonal to each other, i.e., $\int c_{n_\text{T}}(t)c^*_{n_\text{T}'}(t)dt= T_\text{F}\delta[n_\text{T}-n_\text{T}']$, hence the orthogonality in (\ref{eq.othosig}) and (\ref{eq.othnoi}) also holds for CDM signals.

The FIM-relevant characteristics for CDM are listed as follows:
\begin{align}
    &\negthinspace\sum_{n_\text{T}=0}^{N_\text{T}-1}\negthinspace\int\negthinspace |\dot{s}_{n_\text{T}}(t\negthinspace-\negthinspace\tau)|^2\negthinspace dt\negthinspace=\negthinspace\sum_{n_\text{T}=0}^{N_\text{T}-1}\negthinspace\int\negthinspace |c_{n_\text{T}}(t)\dot{s}(t)|^2\negthinspace dt\negthinspace=\negthinspace N_\text{T}\negthinspace\int\negthinspace |\dot{s}(t)|^2\negthinspace dt,\negthinspace\nonumber\\
    &\negthinspace\sum_{n_\text{T}=0}^{N_\text{T}-1}\int |{s}_{n_\text{T}}(t-\tau)|^2dt=N_\text{T}\int |{s}(t-\tau)|^2dt=N_\text{T}E_\text{s},\nonumber\\
    &\negthinspace\sum_{n_\text{T}=0}^{N_\text{T}-1}\negthinspace T_{\text{rms},n_\text{T}}^2\negthinspace =\negthinspace\sum_{n_\text{T}=0}^{N_\text{T}-1}\negthinspace \frac{\negthinspace \negthinspace \int\negthinspace  t^2\negthinspace|\negthinspace{s}\negthinspace_{n_\text{T}}\negthinspace(t\negthinspace -\negthinspace \tau)\negthinspace|^2\negthinspace  dt}{\negthinspace \negthinspace \int\negthinspace  |{s}_{n_\text{T}}\negthinspace(t\negthinspace -\negthinspace \tau)|^2dt}\negthinspace =\negthinspace \frac{\negthinspace N_\text{T}\negthinspace\int\negthinspace  t^2\negthinspace|\negthinspace s(t)\negthinspace|^2\negthinspace dt}{E_\text{s}}
    \negthinspace =\negthinspace  N_\text{T}T_{\text{rms}}^2,\negthinspace\label{eq.cdmcomp}
\end{align}
%with the approximation of $\dot{c}_{n_\text{T}}(t)\approx0$ due to the rectangular shape of ${c}(t)$ with period $\beta T$ much longer than the time scale of variation of ${s}_{n_\text{T}}(t)$. 
Substituting $\gamma_\text{v}=\frac{(\beta-1)\gamma}{\beta N_\text{T}^2}$ and (\ref{eq.cdmcomp}) into \cref{eq.fttv,eq.fphiphiv,eq.fphifv,eq.phitheta,eq.fffv,eq.ftheta,eq.thetatheta}, the relationship between the CRLBs of CDM-VA and those without VA is derived as
\begin{align}
    &C_{\tau,\text{CDM}}=\frac{\beta}{\beta-1}N_\text{T}{C_\tau},\ C_{f_\text{D},\text{CDM}}=\frac{\beta}{\beta-1}{N_\text{T}}{C_{f_\text{D}}},\nonumber\\
    &C_{\theta,\text{CDM}}\negthinspace=\negthinspace\frac{6\beta}{(\beta-1)\pi^2\cos^2(\theta)N_\text{v}(N_\text{v}^2-1)\gamma_\text{v}}\negthinspace\approx\negthinspace\frac{\beta}{(\beta-1)N_\text{T}}C_{\theta_\text{R}}.\label{eq.tdmcrlb}
\end{align}
%where the terms with $\beta$ are produced by discarding the first of each $\beta$ PRIs at the Rx.

\textit{Remark:} Although the deployment of VA increases the delay and Doppler CRLB due to the non-coherent transmission scheme, it is worth noting that its advantages primarily lie in improved spatial resolution and target separability, especially in multi-target scenarios. 
%Furthermore, VA can achieve full-field perception without prior knowledge of the direction, providing greater flexibility and robustness than beamforming. 

\section{Numerical Results}\label{sec.simulation}
Simulations in this work are designed to show the CRLB behavior of the waveforms and VA sensing. 
Specifically, the impact of linear approximation of FMCW in (\ref{eq.crlbfmcw}), the performance of PMCW with various shaped pulses in (\ref{eq.pmcwf}), and the difference between continuous signal-based (\ref{eq.ofdmcrlb}) and discrete symbols-based (\ref{eq.crlbofdmnew}) OFDM/OTFS CRLBs are investigated. In addition, the impact of VA with various multiplexing schemes is evaluated. 

The default values of ESNR, bandwidth, and frame length are set to $10$\,dB, 400\,MHz, 10\,ms, corresponding to the configuration of 5G network operating in FR2. The carrier frequency is set to $28$\,GHz, and the Tx and Rx antennas are ULAs with $N_\text{T}=N_\text{R}=8$. Since the CRLB of AoA is irrelevant to the waveform, the CRLBs of range and radial velocity are observed, which are calculated by $C_r=(\frac{c_0}{2})^2C_\tau$ and $C_v=(\frac{c_0}{2f_\text{c}})^2C_{f_\text{D}}$ for monostatic radar systems, where $c_0$ denotes the speed of light in vacuum. 

%Table~\ref{tab:parameter} gives the default parameters of four waveforms. An automotive radar system is considered, where the carrier frequency is 77\,GHz \cite{9318758} with a bandwidth of 1\,GHz. The number of samples per PRI for FMCW, OFDM, and OTFS are calculated by $L=BT$

\begin{comment}
\begin{table}
    \centering
    \caption{System default parameters.}
    \begin{tabular}{c|c|c|c|c}
    \hline
    Parameters & FMCW & PMCW & OFDM & OTFS\\\hline
        Carrier frequency $f_\text{c}$ & \multicolumn{4}{c}{77\,GHz} \\\hline
        Bandwidth $B$ & \multicolumn{4}{c}{1\,GHz} \\\hline
        Samples per PRI  $L$ & 1024 & 1023 & \multicolumn{2}{c}{2048}\\\hline
        Number of PRIs $K$ & \multicolumn{4}{c}{1024}\\\hline
        PRI $T$ & 2.048\,$\mu$s & 2.046\,$\mu$s  & \multicolumn{2}{c}{2.048\,$\mu$s}\\\hline
        Subcarrier $\Delta f$ & \multicolumn{2}{c|}{-} & \multicolumn{2}{c}{488.28\,kHz}\\\hline
        Frame length $T_\text{F}$ & 2.097\,ms  & 2.095\,ms & \multicolumn{2}{c}{2.097\,ms}\\\hline
        \text{ESNR} $\gamma$ & \multicolumn{4}{c}{10\,dB}\\\hline
    \end{tabular}
    \label{tab:parameter}
\end{table}
\end{comment}

\subsection{Analysis of Waveforms}
The linear asymptotic of FMCW CRLB with increasing $K$ is illustrated in Fig.~\ref{fig:fmcwcrlb}, where the range and velocity CRLBs in (\ref{eq.crlbfmcw}) with and without the approximation of $1-\frac{1}{K^2}\approx1$ are illustrated. The accurately calculated CRLBs converge to the approximated ones with increased $K$, where the difference between them almost vanishes when $K\ge20$, which is generally satisfied in practical systems due to the requirement of velocity estimation capability. 

\begin{figure}
    \centering
    \includegraphics[width=\linewidth]{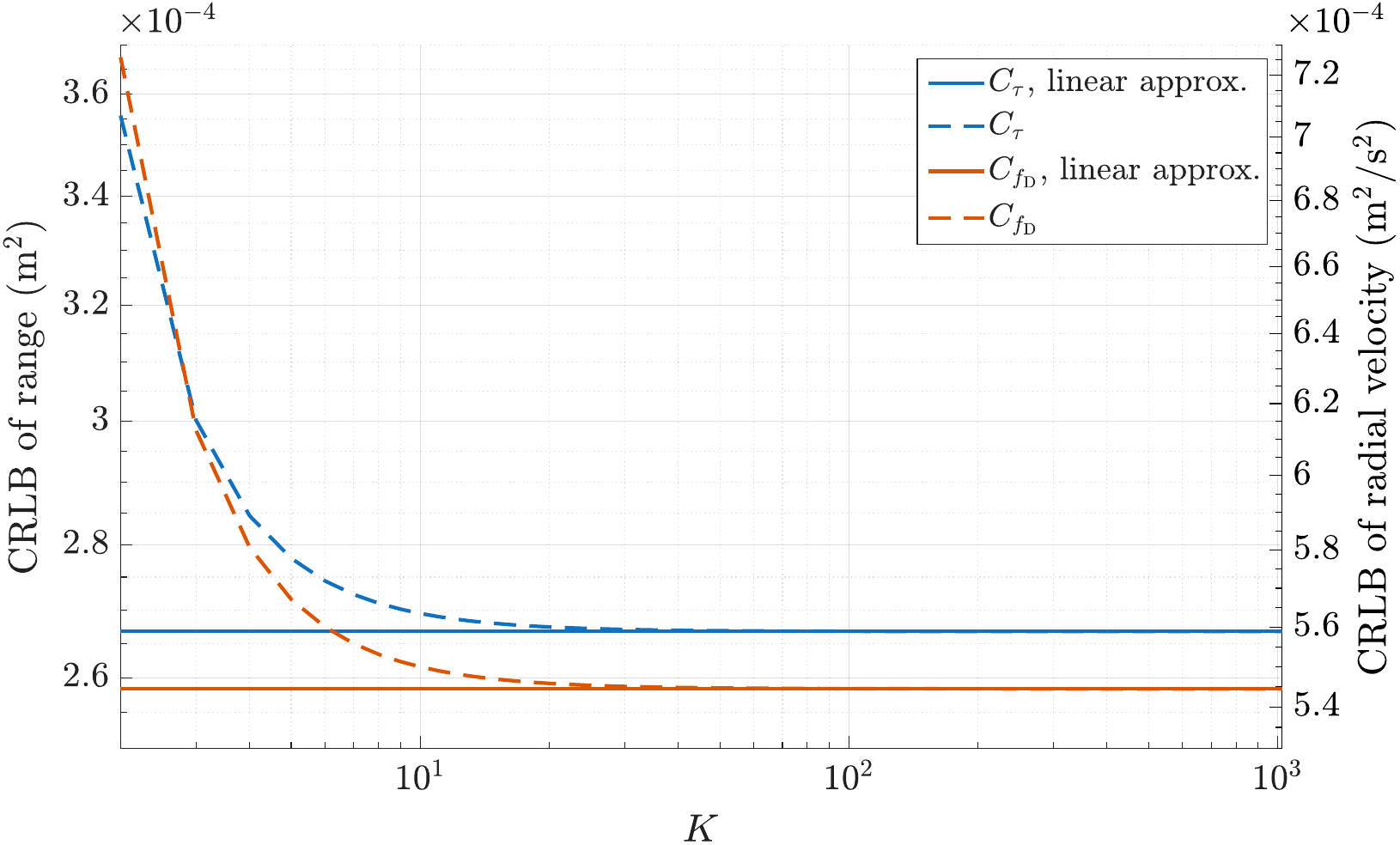}
    \caption{Accurate and approximated CRLBs of FMCW.}
    \label{fig:fmcwcrlb}
\end{figure}
In the investigation of PMCW, the emphasis is put on the behavior of shaped pulses, particularly for RC and RRC with varied roll-off factors, thus the CRLB of radial velocity with the identical expression for all pulses is not tested. Fig.~\ref{fig:pmcwcrlb} shows the result, where the result of the rectangular pulse is drawn as a constant line since it has no roll-off factor. When $\alpha$ approaches 0, RC and RRC converge to the sinc pulse, thus resulting in the same lowest CRLB. The CRLB increases with the growth of $\alpha$. While RRC always behaves better than the rectangular pulse, the performance of RC becomes even worse than it when $\alpha>0.6$. However, it is worth noting that the RC pulse is rarely implemented directly. It is usually used to express the effective response of the cascaded Tx and Rx RRC filters. In practical systems, RRC, rectangular, and truncated sinc pulses are commonly used.

\begin{figure}
    \centering
    \includegraphics[width=\linewidth]{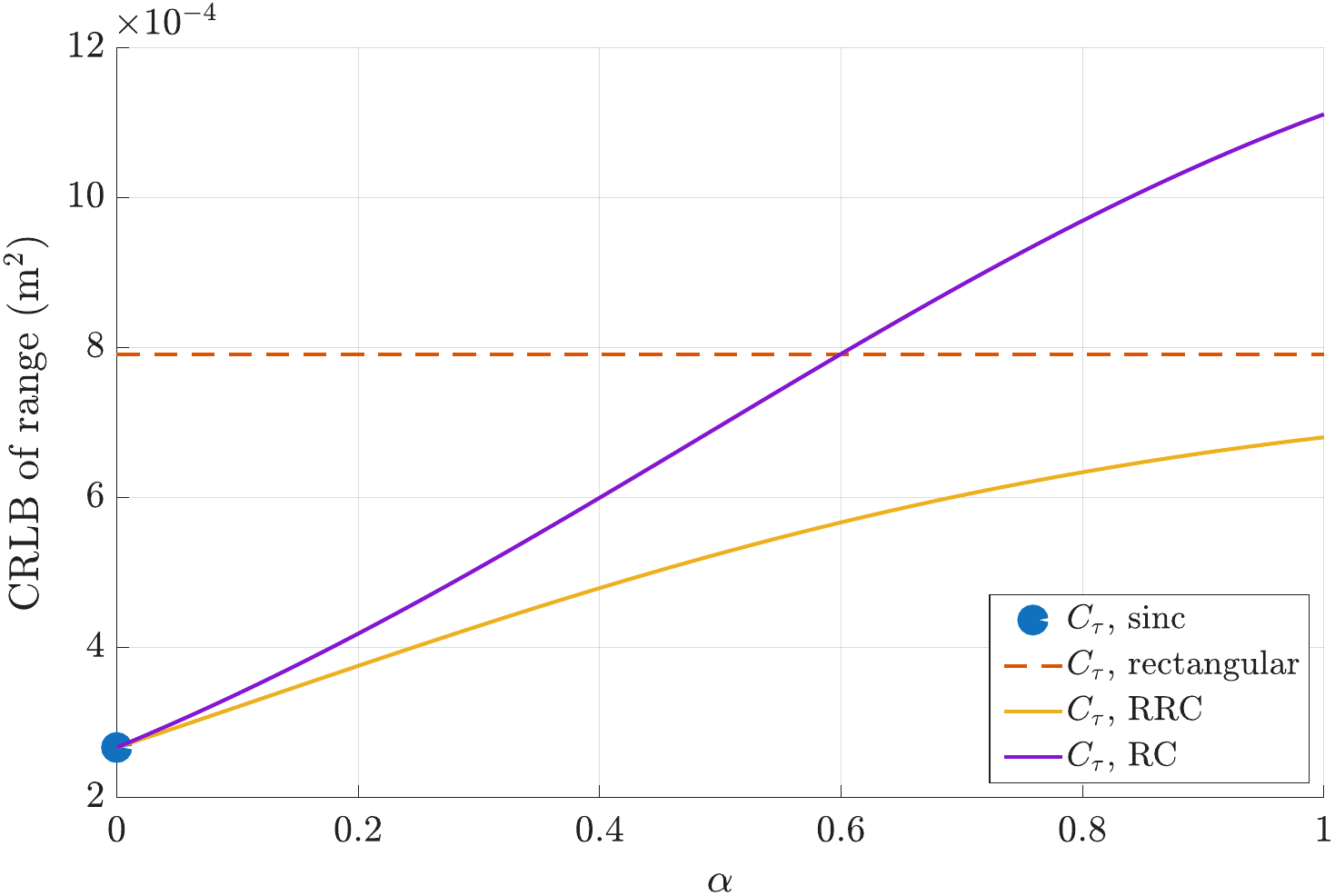}
    \caption{CRLBs of PMCW with different shaped pulses and roll-off factors.}
    \label{fig:pmcwcrlb}
\end{figure}

As demonstrated in \cite{8757044} and analyzed in Section~\ref{sec.otfs}, the gap in CRLB between OFDM and OTFS is negligible, hence the simulation of OFDM and OTFS focuses on the comparison between the generic continuous signal-based CRLB derived in this work and the discrete symbols-based one provided by waveform-specific studies, their ratios $C_r/C_r'$ and $C_{v}/C_{v}'$ respectively with varying $L$ and $K$ are shown in Fig.~\ref{fig:OFDM_lk}. The ratio approaches 0.99 when $L$ and $K$ grow to 10, and exceeds 0.9999 when $K,L>100$, demonstrating the consistency between generic signal-based and waveform-specific analyses. ISAC systems generally require high $K$ and $L$. For instance, the number of subcarriers and OFDM symbols in a 5G OFDM frame is on the order of $10^3$ and $10^2\sim10^3$, respectively, thus the gap between $C_r$ and $C_r'$, $C_{v}$ and $C_{v}'$ can be safely neglected. The result reflects the consistency between the proposed approach and the waveform-specific CRLB analysis, verifying the correctness and applicability of the unified framework.

\begin{figure}
    \centering
    \includegraphics[width=\linewidth]{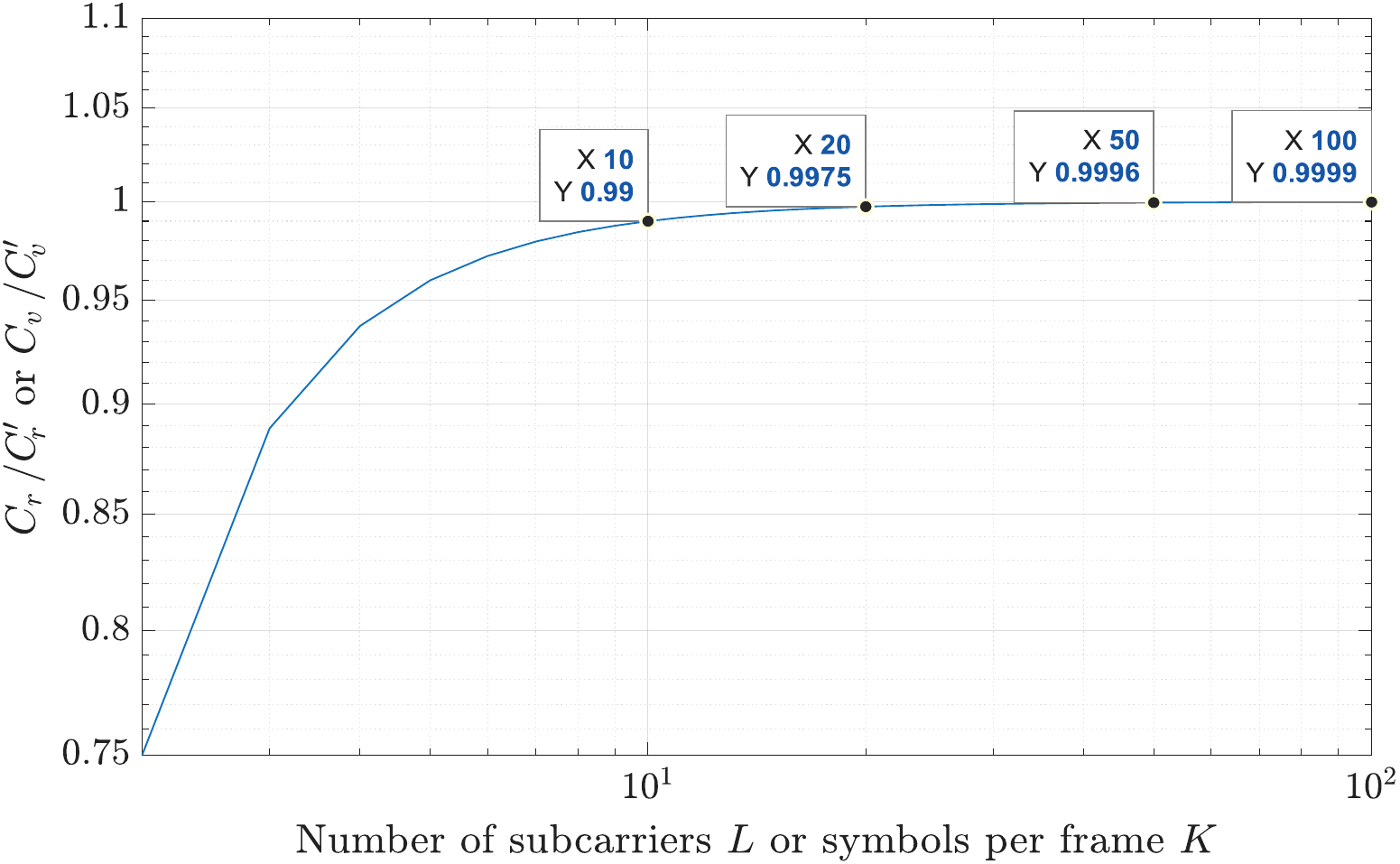}
    \caption{OFDM/OTFS: $C_r/C_r'$ and $C_{v}/C_{v}'$ with respectively varying $L$ and $K$.}
    \label{fig:OFDM_lk}
\end{figure}

\subsection{Impact of VA and Multiplexing}
The CRLBs with VA and different multiplexing schemes are compared in this section. Without loss of generality, the average transmission power of each antenna is assumed to be identical, and the ratio between the CRLBs with different multiplexing schemes and the CRLB without VA is observed. The system parameters follow the configuration defined at the beginning of Section~\ref{sec.simulation}, while $N_\text{T}$ is chosen as the variable since it determines the number of transmission/multiplexed signals and is the dominant factor influencing the CRLBs with VA. The result is shown in Fig.~\ref{fig.VA}, where the ratios in range and radial velocity CRLBs are identical, hence they are plotted only once. The deployment of VA leads to increased CRLB in range and radial velocity due to the non-coherent transmission. The CRLB of TDM has the most significant increase since $C_{\tau,\text{TDM}}/C_{\tau},C_{f_\text{D},\text{TDM}}/C_{f_\text{D}}= N_\text{T}^2$, while the results of FDM and CDM have the same slope of $N_\text{T}$, with a gap determined by $\beta$. The gap becomes negligible when $\beta\ge8$, however, a larger $\beta$ leads to higher sensitivity to dynamic environments \cite{10444556}, resulting in a trade-off between these two functionalities.

\begin{figure}
\centering
\subfloat[Range and radial velocity CRLB ratios.]{
    \includegraphics[width=0.95\linewidth]{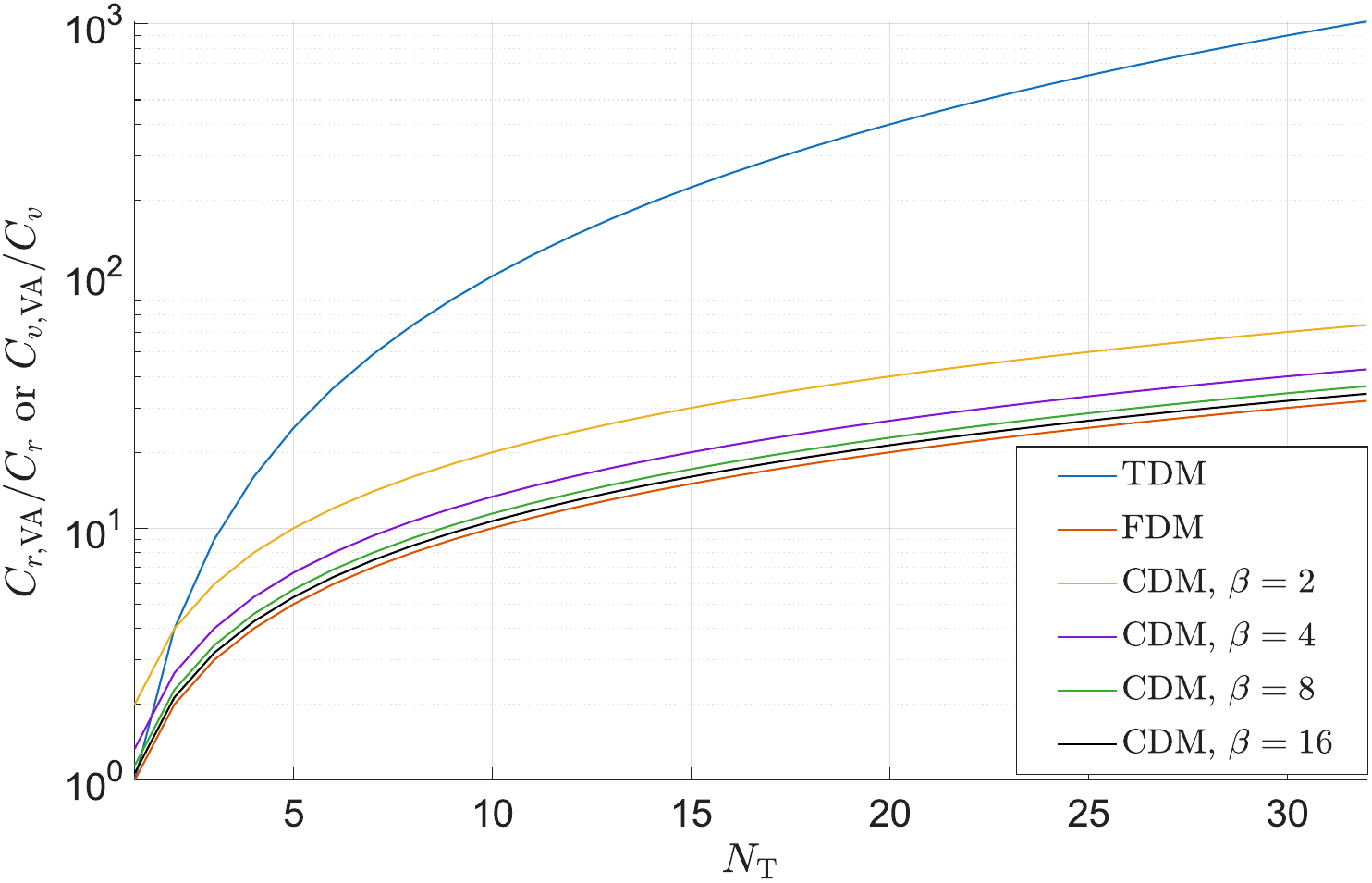}
    \label{fig.VA_tf}}
    
\subfloat[Angle CRLB ratios.]{
    \includegraphics[width=0.95\linewidth]{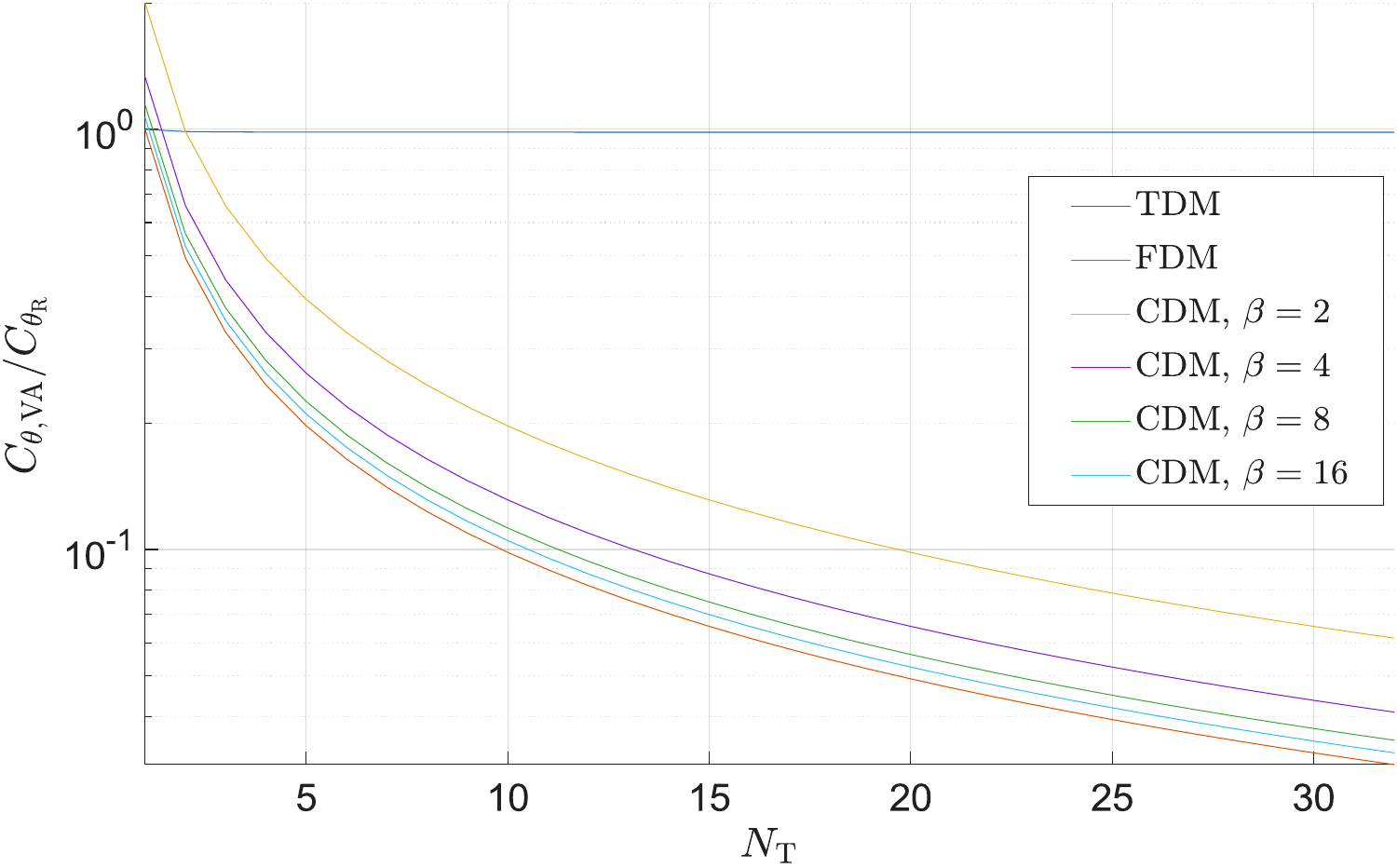}
    \label{fig.VA_a}}
    \caption{CRLB ratios between VA sensing with different multiplexing schemes and non-VA sensing.}
    \label{fig.VA}
\end{figure}

In contrast to range and radial velocity, VA significantly reduces the CRLB of AoA, as shown in Fig.~\ref{fig.VA}(b). While TDM cannot benefit from reduced angle CRLB due to the power issue, the results of FDM and CDM show the same decreasing trend with a slope of $1/N_\text{T}$, and the difference between them is controlled by $\beta$. In conclusion, VA with FDM and CDM provides improvement in the CRLB of AoA at the cost of degraded achievable accuracy of range and radial velocity. In contrast, TDM-VA has range, radial velocity, and AoA CRLBs that are $N_\text{T}$ times higher than those of other multiplexing schemes, and the improvement in AoA CRLB vanishes. However, it is worth noting that CRLB captures only the achievable accuracy, which is one dimension of radar performance, whereas the advantages of VA primarily lie in enhanced angle resolution and target separability, without requiring additional radio-frequency chains.

\section{Conclusion}\label{sec.conclusion}
This paper provides a unified CRLB framework for radar sensing systems. Based on the generic signal model, we analyzed the coupling issue between the delay and Doppler elements in FIM, and presented the conditions under which the coupling can be eliminated or neglected. Afterward, we investigated the fulfillment of the conditions for four representative waveforms, including FMCW, PMCW, OFDM, and OTFS, and derived their CRLBs based on the proposed framework. Moreover, the virtual array with different multiplexing schemes is included in the framework, and its influence on the CRLBs is analyzed. 
The numerical results show a high consistency with those obtained by the waveform-specific analysis in related works, demonstrating the correctness and applicability of the proposed framework. Compared to waveform-specific derivation in related works, the analysis within the unified framework provides a consistent form with a simple expression, enabling a direct performance comparison and offering a tractable metric for optimization designs such as beamforming and resource allocation. The proposed unified framework demonstrates strong generality, flexibility, and waveform-compatibility, offering a novel tool for CRLB analysis of various waveforms.

\begin{appendices}
\section{Derivation of Signal-Level FIM}\label{sec.SLF}
This section provides a summary of the derivation of FIM in previous works \cite{923295,7347470,rs12182913}.
Without loss of generality, the time origin can be chosen such that the energy centroid of the waveform is zero, since this only corresponds to a shift of the time reference and cannot influence the information content. 
Defining energy centroid: 
\begin{align}
    T_0=\frac{1}{E_\text{s}}\int_{-\infty}^{\infty} t|s(t)|^2dt.
\end{align}
Then the time axis is redefined by a shift of $t\leftarrow t-T_0$, i.e., the energy centroid is moved to the origin, hence
\begin{align}
    \int_{-\infty}^{\infty} t|s(t)|^2dt=0.\label{eq.energycent}
\end{align}

For $s(t)$ with constant envelop or \ac{i.i.d.} symbols, the energy is uniformly distributed over time, thus $\int_{-\infty}^{0}t|s(t)|^2dt\approx -\int_{0}^{\infty}t|s(t)|^2dt$, $T_0\approx{T_\text{F}}/{2}$. This holds for ISAC waveforms such as FMCW, PMCW, OFDM, OTFS, etc.

The received signal is reformulated by
\begin{align}
    \mathbf{r}(t)&=Ae^{j\phi}\mathbf{a}_\text{R}^*(\theta_\text{R}){s}(t-\tau)e^{j2\pi f_\text{D}(t+T_0)}+\mathbf{n}(t).
\end{align}

The FIM of $\pmb{\Theta}$ is given by \cite{7347470}
\begin{align}
    \mathbf{F}_{i,j}\negthinspace=\negthinspace-\mathbb{E}\Big\{\frac{\partial^2\log f(\mathbf{r}|\pmb{\Theta})}{\partial{\Theta}_i\partial{\Theta}_j}\Big\}\negthinspace=\negthinspace
    \frac{2}{\sigma^2}\mathcal{R}\Big\{\negthinspace\int_{-\infty}^{\infty}\frac{\partial \pmb{\mu}^H(t)}{\partial\Theta_i}\frac{\partial \pmb{\mu}(t)}{\partial\Theta_j}dt\negthinspace\Big\},
\end{align}
where the derivatives are given by
\begin{align}
    &\frac{\partial \pmb{\mu}(t)}{\partial A}=\frac{\pmb{\mu}(t)}{A},\\
    &\frac{\partial \pmb{\mu}(t)}{\partial \phi}=j\pmb{\mu}(t),\\
    &\frac{\partial \pmb{\mu}(t)}{\partial \tau}=-Ae^{j\phi}\mathbf{a}_\text{R}^*(\theta_\text{R})\dot{s}(t-\tau)e^{j2\pi f_\text{D}(t+T_0)},\\
    &\frac{\partial \pmb{\mu}(t)}{\partial f_\text{D}}=j2\pi (t+T_0)\pmb{\mu}(t),\\
    &\frac{\partial \pmb{\mu}(t)}{\partial \theta_\text{R}}=Ae^{j\phi}\dot{\mathbf{a}}_\text{R}^*(\theta_\text{R})s(t-\tau)e^{j2\pi f_\text{D}(t+T_0)},
\end{align}
where $\dot{\mathbf{a}}^*_\text{R}(\theta_\text{R})=\partial\mathbf{a}^*_\text{R}(\theta_\text{R})/\partial\theta_\text{R}=-j\pi\cos(\theta_\text{R})\mathbf{D}\mathbf{a}^*_\text{R}(\theta_\text{R})$, $\mathbf{D}=\mathrm{diag}(0,1,...,N_\text{R}-1)$, and $\dot{s}(t-\tau)=d s(t-\tau)/d(t-\tau)$.
The elements of $\mathbf{F}$ are calculated as follows:
\begin{align}
    &F_{AA}\negthinspace=\negthinspace\frac{2}{\sigma^2}\negthinspace\mathcal{R}\Big\{\negthinspace\int\negthinspace\frac{1}{A^2}\pmb{\mu}^H\pmb{\mu}dt\Big\}\negthinspace=\negthinspace\frac{2N_\text{R}}{\sigma^2}\negthinspace\int\negthinspace|s(t-\tau)|^2dt\negthinspace=\negthinspace\frac{2N_\text{R}}{\sigma^2} E_\text{s},
\end{align}
where $E_\text{s}=\int|s(t)|^2dt$ denotes the signal energy. %{\color{red}With sampling, the signal energy can be written as $E_\text{s}=P_\text{s}N_\text{s}$, where $N_\text{s}$ is the number of samples.}
\begin{align}
    F_{A\phi}&=\frac{2}{\sigma^2}\mathcal{R}\Big\{\int j\frac{1}{A}\pmb{\mu}^H\pmb{\mu}dt\Big\}=0.
\end{align}
\begin{align}
    F_{A\tau}=-\frac{2}{\sigma^2}N_\text{R}A\mathcal{R}\Big\{\int s^*(t-\tau)\dot{s}(t-\tau)dt\Big\}=0,
\end{align}
where $C_0=\int s^*(t-\tau)\dot{s}(t-\tau)dt$ is pure imaginary \cite{rs12182913} since
\begin{align}
    &\frac{d|s(t-\tau)|^2}{d(t-\tau)}
    \overset{u=t-\tau}{=}\frac{d|s(u)|^2}{du}= \frac{d(s(u)s^*(u))}{du}\nonumber\\
    &= \dot{s}(u)s^*(u) +\dot{s}^*(u)s(u)\nonumber\\
    &= 2\mathcal{R}\{s^*(u)\dot{s}(u)\}=2\mathcal{R}\{s^*(t-\tau)\dot{s}(t-\tau)\},
\end{align}
\begin{align}
    &\mathcal{R}\{C_0\}\negthinspace=\negthinspace\mathcal{R}\Big\{\int_{-\infty}^{\infty}s^*(u)\dot{s}(u)du\Big\}\negthinspace=\negthinspace\negthinspace\int_{-\infty}^{\infty}\frac{1}{2} \frac{d|s(u)|^2}{du}\, du \nonumber\\
    &=\frac{1}{2}|s(u)|^2\Big|_{-\infty}^{+\infty}= 0,
\end{align}
which holds for all time-limited signals.
\begin{align}
    &F_{Af_\text{D}}=\frac{2}{\sigma^2}\mathcal{R}\{j2\pi AN_\text{R}\int (t+T_0)|s(t-\tau)|^2dt\}=0.
\end{align}
\begin{align}
    &F_{A\theta_\text{R}}=\frac{2}{\sigma^2}A\mathcal{R}\Big\{\int \mathbf{a}_\text{R}^T(\theta_\text{R})\dot{\mathbf{a}}_\text{R}^*(\theta_\text{R})|s(t-\tau)|^2dt\Big\}\nonumber\\
    &=\frac{2}{\sigma^2}A\mathcal{R}\Big\{\int  \frac{-j\pi(N_\text{R}-1)N_\text{R}\cos(\theta_\text{R})}{2}|s(t-\tau)|^2dt\Big\}\negthinspace=0.
\end{align}
\begin{align}
    F_{\phi\phi}=%\frac{2}{\sigma^2}\int -j^2\pmb{\mu}^H\pmb{\mu}dt
    \frac{2}{\sigma^2}N_\text{R}A^2\int  |{s}(t-\tau)|^2dt=2N_\text{R}\gamma.
\end{align}
%where $\gamma=\frac{A^2E_\text{s}}{\sigma^2}$ represents the energy \ac{ESNR} calculating the overall energy ratio between signal and AWGN at the Rx. 
\begin{align}
    &F_{\phi\tau}=\frac{2}{\sigma^2}A^2N_\text{R}\mathcal{R}\Big\{j\int  s^*(t-\tau)\dot{s}(t-\tau)dt\Big\}\nonumber\\
    &=\frac{2}{\sigma^2}A^2N_\text{R}\mathcal{R}\{jC_0\}=j\frac{2}{\sigma^2}A^2N_\text{R}C_0.
\end{align}
\begin{align}
    &F_{\phi f_\text{D}}=\frac{4}{\sigma^2}\pi A^2N_\text{R} \int (t+T_0) |s(t-\tau)|^2 dt\nonumber\\
    &\overset{u=t-\tau}{=}\frac{4}{\sigma^2}\pi A^2N_\text{R}\Big(\int u|s(u)|^2du+(T_0+\tau) E_\text{s}\Big)\nonumber\\
    &\overset{(\ref{eq.energycent})}{=}\frac{4}{\sigma^2}\pi A^2N_\text{R}(T_0+\tau) E_\text{s}=4\pi N_\text{R}(T_0+\tau)\gamma.
\end{align}
\begin{align}
    &F_{\phi\theta}=-\frac{2}{\sigma^2}\mathcal{R}\Big\{jA^2\mathbf{a}_\text{R}^T(\theta_\text{R})\dot{\mathbf{a}}^*_\text{R}(\theta_\text{R})\int_{-\infty}^{\infty}|s(t-\tau)|^2dt\Big\}\nonumber\\
    &\negthinspace=\negthinspace-\frac{\pi A^2N_\text{R}(N_\text{R}-1)\cos(\theta_\text{R})}{\sigma^2}E_\text{s}\negthinspace=\negthinspace-\pi N_\text{R}(N_\text{R}-1)\cos(\theta_\text{R})\gamma.
\end{align}
\begin{align}
    F_{\tau\tau}=\frac{2}{\sigma^2}A^2N_\text{R}\int |\dot{s}(t-\tau)|^2dt=8\pi^2N_\text{R}\gamma B_\text{rms}^2,%=\frac{2\pi^2\text{SNR}B^2}{3}\\
\end{align}
where $\dot{s}(t)\leftrightarrow j2\pi fS(f)$, and according to the Parseval's theorem:
\begin{align}
    \negthinspace\int\negthinspace|\dot{s}(t)|\negthinspace^2\negthinspace dt\negthinspace=\negthinspace\int\negthinspace|j2\pi fS(f)|\negthinspace^2\negthinspace d\negthinspace f\negthinspace=\negthinspace4\pi^2\negthinspace\int\negthinspace f^2\negthinspace|S(f)|\negthinspace^2\negthinspace d\negthinspace f\negthinspace=\negthinspace 4\pi^2\negthinspace B_{\negthinspace\text{rms}}^2E_\text{s}.\negthinspace
\end{align}

\begin{align}
    &F_{\tau f_\text{D}}=-\frac{4\pi}{\sigma^2}A^2N_\text{R}\mathcal{R}\Big\{j\int(t+T_0) \dot{s}^*(t-\tau)s(t-\tau)dt\Big\}\nonumber\\
    &=\frac{4\pi}{\sigma^2}A^2N_\text{R}(\mathcal{I}\{C_1\}+(T_0+\tau)jC_0).
    \end{align}
    \begin{align}
    &F_{\tau\theta}=-\frac{2}{\sigma^2}\mathcal{R}\Big\{A^2\mathbf{a}_\text{R}^T(\theta_\text{R})\dot{\mathbf{a}}_\text{R}^*(\theta_\text{R})\int \dot{s}^*(t-\tau)s(t-\tau)dt\Big\}\nonumber\\
    &=\frac{\pi A^2N_\text{R}(N_\text{R}-1)\cos(\theta_\text{R})}{\sigma^2}\mathcal{R}\Big\{j\int \dot{s}^*(t-\tau)s(t-\tau)dt\Big\}\nonumber\\
    &=-j\frac{\pi A^2 N_\text{R}(N_\text{R}-1)\cos(\theta_\text{R})}{\sigma^2}C_0.
\end{align}
\begin{align}
    &F_{f_\text{D}f_\text{D}}=\frac{8\pi^2A^2N_\text{R}}{\sigma^2}\int (t+T_0)^2|s(t-\tau)|^2dt\nonumber\\
    &=\frac{8\pi^2A^2N_\text{R}}{\sigma^2}\int (u+T_0+\tau)^2|s(u)|^2du\nonumber\\
    &=\frac{8\pi^2A^2N_\text{R}}{\sigma^2}\int (u^2+2u(T_0+\tau)+(T_0+\tau)^2)|s(u)|^2du\nonumber\\
    &\overset{(\ref{eq.energycent})}{=}8\pi^2N_\text{R}\gamma(T_\text{rms}^2+(T_0+\tau)^2).
\end{align}
%where the term of $\int 2u(T_0+\tau)|s(u)|^2du=0$ due to (\ref{eq.energycent}). %And the RMS time is defined by
\begin{align}
    &\negthinspace F_{f_\text{D}\theta_\text{R}}\negthinspace=\negthinspace-\frac{2}{\sigma^2}\pi^2A^2N_\text{R}(N_\text{R}\negthinspace-\negthinspace1)\cos(\theta_\text{R}) \negthinspace\int\negthinspace (t\negthinspace+\negthinspace T_0)|s(t\negthinspace-\negthinspace\tau)|^2dt\nonumber\\
    &\overset{(\ref{eq.energycent})}{=}-{2}\pi^2\cos(\theta_\text{R}) (\tau+T_0) \gamma N_\text{R}(N_\text{R}-1).
\end{align}
\begin{align}
    &F_{\theta_\text{R}\theta_\text{R}}
    =\frac{2}{\sigma^2}\pi^2A^2\cos^2(\theta_\text{R})\int|s(t-\tau)|^2dt\sum_{n=0}^{N_\text{R}-1}n^2\nonumber\\
    &=\frac{\pi^2\cos^2(\theta_\text{R})\gamma(N_\text{R}-1)N_\text{R}(2N_\text{R}-1)}{3}.
\end{align}

The amplitude is decoupled from other parameters, thus its corresponding rows and columns can be directly removed from the FIM.
The resulting FIM is then given by
\begin{align}
    \mathbf{F}'
    =\begin{bmatrix}
        F_{\phi\phi} & F_{\phi\tau} & F_{\phi f_\text{D}} & F_{\phi\theta_\text{R}}\\
        F_{\phi\tau} & F_{\tau\tau} & F_{\tau f_\text{D}} & F_{\tau\theta_\text{R}}\\
         F_{\phi f_\text{D}} & F_{\tau f_\text{D}} & F_{f_\text{D}f_\text{D}} & F_{f_\text{D}\theta_\text{R}}\\
        F_{\phi\theta_\text{R}} & F_{\tau\theta_\text{R}} & F_{f_\text{D}\theta_\text{R}} & F_{\theta_\text{R}\theta_\text{R}}
    \end{bmatrix}=\begin{bmatrix}
        F_{\phi\phi} & \mathbf{F}_{\phi s}\\\mathbf{F}_{s\phi}&\mathbf{F}_{ss}
    \end{bmatrix},
\end{align}
The EFIM of the parameters of interest is then calculated by
\begin{align}
    \mathbf{E}=\mathbf{F}_{ss}-\mathbf{F}_{s\phi}F_{\phi\phi}^{-1}\mathbf{F}_{\phi s},
\end{align}
the result is given in (\ref{eq.efim}).

{\textit{Remark:} Since the received signal experiences sampling before further processing, the signal energy can be written as $E_\text{s}=\sum_n |s[n]|^2=P_\text{s}N_\text{s}$, where $N_\text{s}$ is the number of samples. Hence, the relationship between ESNR and SNR can be given by $\gamma=N_\text{s}\,\text{SNR}$. For PMCW, the sampling rate is generally the chip period $T_\text{c}=(1+\alpha)/B$, hence $N_\text{s}=BT_\text{F}/(1+\alpha)$; for OFDM and OTFS, $T_\text{s}=1/B$, $N_\text{s}=BT_\text{F}$; while for FMCW, the sampling rate does not depend on $B$. However, since upsampling or downsampling may be deployed at the radar receiver, $N_\text{s}$ may change accordingly. Therefore, we describe the CRLBs via ESNR instead of SNR in the main text.}

\section{RMS Bandwidth of Different Pulses}\label{sec.brms}

\subsection{Rectangular Pulse}
For rectangular pulses, $g(t)=\text{rect}({t}/{T_\text{c}})$, $G(f)=T_\text{c}\text{sinc}(fT_\text{c})$, while the sidelobes of the spectrum are generally filtered out, resulting in a bandwidth of $B=2/T_\text{c}$. Hence,
\begin{align}
    &B_\text{rms}^2=\frac{\int_{-{1}/{T_\text{c}}}^{{1}/{T_\text{c}}} f^2|T_\text{c}\text{sinc}(fT_\text{c})|^2df}{\int_{-{1}/{T_\text{c}}}^{{1}/{T_\text{c}}}  |T_\text{c}\text{sinc}(fT_\text{c})|^2df}\overset{x=fT_\text{c}}{=}\frac{\int_{-1}^1 x^2\text{sinc}^2(x)dx}{\int_{-1}^1 T_\text{c}^2\text{sinc}^2(x)dx}\nonumber\\
    &=\frac{1}{2\pi T_\text{c}^2\text{Si}(2\pi)}\approx \frac{0.1122}{T_\text{c}^2}\approx0.0281B^2,
\end{align}
where
\begin{align}
    &\int_{-1}^1 x^2\text{sinc}^2(x)dx=\frac{1}{\pi^2}\int_{-1}^1 \sin^2(\pi x)dx=\frac{1}{\pi^2},
\end{align}
\begin{align}
    \int_{-1}^1 \text{sinc}^2(x)dx=\frac{2}{\pi}\int_{0}^\pi\Big(\frac{\sin(u)}{u}\Big)^2du=\frac{2}{\pi}\text{Si}(2\pi),
\end{align}
where $\text{Si}(x)=\int_0^x\frac{\sin(t)}{t}dt$ is the sine integral function \cite{article}, $\text{Si}(2\pi)\approx1.4182$, and  
\begin{align}
    &\negthinspace\text{Si}(2\pi)\negthinspace=\negthinspace\int_{0}^{2\pi}\negthinspace\frac{\sin(t)}{t}\negthinspace dt\negthinspace=\negthinspace\int_{0}^{\pi}\negthinspace\frac{\sin(2u)}{u}\negthinspace du\negthinspace=\negthinspace\int_{0}^{\pi}\negthinspace\frac{2\negthinspace\sin(u)\negthinspace\cos(u)}{u}\negthinspace du\negthinspace\nonumber\\
    &\negthinspace=\negthinspace\frac{\sin^2(u)}{u}\Big|_{0}^\pi+\int_{0}^{\pi}\Big(\frac{\sin(u)}{u}\Big)^2du=\int_{0}^{\pi}\Big(\frac{\sin(u)}{u}\Big)^2du.
\end{align}

\subsection{Sinc Pulse}
For sinc pulses, $g(t)=\text{sinc}(t/T_\text{c})$, $G(f)=T_\text{c}\mathrm{rect}(fT_\text{c})$, $B=1/T_\text{c}$, thus the RMS bandwidth is calculated by
\begin{align}
    &B_\text{rms}^2=\frac{\int f^2|T_\text{c}\text{rect}(fT_\text{c})|^2df}{\int |T_\text{c}\text{rect}(fT_\text{c})|^2df}\overset{x=fT_\text{c}}{=}\frac{\int x^2|\text{rect}(x)|^2dx}{\int T_\text{c}^2|\text{rect}(x)|^2dx}\nonumber\\
    &=\frac{1}{12T_\text{c}^2}=\frac{B^2}{12}.
\end{align}

\subsection{RRC Pulse}
RRC and RC filters are used in communication systems for pulse shaping and matched filtering, limiting the signal bandwidth into $B={(1+\alpha)}/{T_\text{c}}$ and avoiding \ac{ISI}. The power spectrum of a RRC pulse is given by
\begin{align}
    %g(t)=\begin{cases}
    %    \frac{\pi}{4T_\text{c}}\mathrm{sinc}\Big(\frac{1}{2\alpha}\Big), & t=\pm \frac{{T}_\text{c}}{2\alpha}\\
    %    \frac{1}{T_\text{c}}\mathrm{sinc}\Big(\frac{t}{T_\text{c}}\Big)\frac{\cos\Big(\frac{\pi\alpha t}{T_\text{c}}\Big)}{1-\Big(\frac{2\alpha t}{T_\text{c}}\Big)^2}, &\text{otherwise.}
    %\end{cases}\\
    |G(f)|^2=\begin{cases}
        T_\text{c},&|f|\leq f_\text{L},\\
        \frac{T_\text{c}}{2}\Big[1+\cos\Big(\frac{\pi T_\text{c}}{\alpha}(|f|-f_\text{L})\Big)\Big],&f_\text{L}<|f|\le f_\text{H},\\
        0,&\text{else},
    \end{cases}\label{eq.rrc}
\end{align}
where $\alpha$ is the roll-off factor, $f_\text{L}=\frac{1-\alpha}{2T_\text{c}}$, and $f_\text{H}=\frac{1+\alpha}{2T_\text{c}}$. The definition of RRC spectrum in (\ref{eq.rrc}) considers the normalized power, i.e., $\int_{-\infty}^{\infty}|G(f)|^2df=1$. The $B_\text{rms}^2$ is derived by
\begin{align}
    &B_\text{rms}^2=\int_{-\infty}^{\infty}f^2|G(f)|^2df\nonumber\\
    &\negthinspace=\negthinspace2\Big(\negthinspace\int_{0}^{f_\text{L}}\negthinspace f^2T_\text{c}df+\frac{T_\text{c}}{2}\negthinspace\int_{f_\text{L}}^{f_\text{H}}\negthinspace f^2\Big[1\negthinspace+\negthinspace\cos\Big(\frac{\pi T_\text{c}}{\alpha}(f-f_\text{L})\Big)\Big]df\negthinspace\Big)\nonumber\\
    &\negthinspace=\negthinspace\frac{(1-\alpha)^3}{12T_\text{c}^2}\negthinspace+\negthinspace\frac{\alpha^3+3\alpha}{12T_\text{c}^2}\negthinspace+\negthinspace
    T_\text{c}\negthinspace\int_{0}^{f_\text{H}-f_\text{L}}\negthinspace(f+f_\text{L})^2\cos\Big(\frac{\pi T_\text{c}}{\alpha}f\Big)df\nonumber\\
    &\overset{u=\pi T_\text{c}f/\alpha}{=}\frac{3\alpha^2+1}{12T_\text{c}^2}+\frac{\alpha}{\pi}\int_{0}^{\pi}\big(\frac{\alpha}{\pi T_\text{c}}u+f_\text{L}\big)^2\cos(u)du\nonumber\\
    &\overset{(\ref{eq.where1})}{=}\frac{3\alpha^2+1}{12T_\text{c}^2}+\frac{\alpha}{\pi}(-\frac{4\alpha f_\text{L}}{\pi T_\text{c}}-\frac{2\alpha^2}{\pi T_\text{c}^2})\nonumber\\
    &=\frac{3\alpha^2+1}{12T_\text{c}^2}-\frac{2\alpha^2}{\pi^2T_\text{c}^2}=\frac{B^2}{12}\cdot\frac{(3\pi^2-24)\alpha^2+\pi^2}{\pi^2(1+\alpha)^2},
\end{align}
where
\begin{align}
    \negthinspace\int_{0}^{\negthinspace\pi}\negthinspace(\negthinspace\frac{\alpha}{\pi T_\text{c}}\negthinspace u\negthinspace+\negthinspace f_\text{L}\negthinspace)\negthinspace^2\negthinspace\cos(\negthinspace u\negthinspace)du\negthinspace=\negthinspace\int_{0}^{\negthinspace\pi}\negthinspace(f_\text{L}^2\negthinspace+\negthinspace\frac{2\alpha f_\text{L}}{\pi T_\text{c}}\negthinspace u\negthinspace+\negthinspace\frac{\alpha^2}{\pi^2T_\text{c}^2}\negthinspace u^2)\negthinspace\cos(\negthinspace u\negthinspace)du,\negthinspace\label{eq.where1}
\end{align}
and the individual terms are calculated by
\begin{align}
    &\int_{0}^{\pi}f_\text{L}^2\cos(u)du=0,\\
    &\frac{2\alpha f_\text{L}}{\pi T_\text{c}}\int_{0}^{\pi}u\cos(u)du=-\frac{4\alpha f_\text{L}}{\pi T_\text{c}},\\
    &\frac{\alpha^2}{\pi^2T_\text{c}^2}\int_{0}^{\pi}u^2\cos(u)du=-\frac{2\alpha^2}{\pi T_\text{c}^2}.
\end{align}

\subsection{RC Pulse}
The spectrum of RC pulses is the square of that of RRC, thus
\begin{align}
    |G(f)|^2=\begin{cases}
        T_\text{c}^2,&|f|\leq f_\text{L}\\
        \frac{T_\text{c}^2}{4}\Big[1+\cos\Big(\frac{\pi T_\text{c}}{\alpha}(|f|-f_\text{L})\Big)\Big]^2,&f_\text{L}<|f|\le f_\text{H}\\
        0,&\text{else}.
    \end{cases}\label{eq.rc}
\end{align}
The denominator and numerator of $B_\text{rms}^2$ are derived by
\begin{align}
    &\int\negthinspace |G(f)|^2 df\negthinspace=\negthinspace2\negthinspace\int_0^{f_\text{L}}\negthinspace T_\text{c}^2\negthinspace df\negthinspace+\negthinspace2\negthinspace\int_{f_\text{L}}^{f_\text{H}}\negthinspace\frac{T_\text{c}^2}{4}\negthinspace\Big(\negthinspace1\negthinspace+\negthinspace\cos\negthinspace\big(\negthinspace\frac{\pi T\negthinspace_\text{c}}{\alpha}(\negthinspace f\negthinspace-\negthinspace f_\text{L}\negthinspace)\negthinspace\big)\negthinspace\Big)^2\negthinspace df\negthinspace\nonumber\\
    &=\negthinspace2f_\text{L}T_\text{c}^2\negthinspace+\negthinspace\frac{T_\text{c}^2}{2}\negthinspace\int_{0}^{f_\text{H}-f_\text{L}}\negthinspace\Big(\negthinspace1\negthinspace+\negthinspace2\negthinspace\cos\negthinspace\big(\negthinspace\frac{\pi T_\text{c}}{\alpha}\negthinspace f\big)\negthinspace+\negthinspace\frac{1\negthinspace+\negthinspace\cos\negthinspace\big(\negthinspace\frac{2\pi T_\text{c}}{\alpha}\negthinspace f\big)}{2}\negthinspace\Big)\negthinspace df\negthinspace\nonumber\\
    &=\negthinspace(\negthinspace1\negthinspace-\negthinspace\alpha\negthinspace)T\negthinspace_\text{c}\negthinspace+\negthinspace\frac{3T\negthinspace_\text{c}\alpha}{4}\negthinspace+\negthinspace T_\text{c}^2\negthinspace\int_{0}^{f_\text{H}\negthinspace-\negthinspace f_\text{L}}\negthinspace\Big(\negthinspace\cos\negthinspace\big(\negthinspace\frac{\pi T\negthinspace_\text{c}}{\alpha}\negthinspace f\big)\negthinspace+\negthinspace\frac{1}{4}\negthinspace\cos\negthinspace\big(\negthinspace\frac{2\pi T\negthinspace_\text{c}}{\alpha}\negthinspace f\big)\negthinspace\Big)\negthinspace df\negthinspace\nonumber\\
    &=(1\negthinspace-\negthinspace\alpha)T_\text{c}\negthinspace+\negthinspace\frac{3T_\text{c}\alpha}{4}\negthinspace+\negthinspace\frac{\alpha T_\text{c}}{\pi}\Big(\negthinspace\int_0^\pi\negthinspace\cos(u)du\negthinspace+\negthinspace\frac{1}{8}\negthinspace\int_0^{2\pi}\negthinspace\cos(u)du\negthinspace\Big)\nonumber\\
    &=T_\text{c}(1-\frac{\alpha}{4}),
\end{align}
\begin{align}
    &\int\negthinspace f^2|G(f)|^2 df\nonumber\\
    &=2\int_0^{f_\text{L}}T_\text{c}^2f^2df+\frac{T_\text{c}^2}{2}\int_{f_\text{L}}^{f_\text{H}}f^2\Big(1+\cos\big(\frac{\pi T_\text{c}}{\alpha}(f-f_\text{L})\big)\Big)^2df\nonumber\\
    &\overset{(\ref{eq.104})}{=}\frac{2}{3}T_\text{c}^2f_\text{L}^3+\frac{T_\text{c}^2}{2}\Big(\frac{\alpha^3+3\alpha}{8T_\text{c}^3}-\frac{4\alpha^2}{\pi^2T_\text{c}^3}+\frac{\alpha^3}{4\pi^2T_\text{c}^3}\Big)\nonumber\\
    %&=\frac{2}{3}T_\text{c}^2f_\text{L}^3+\frac{T_\text{c}^2}{2}\int_{f_\text{L}}^{f_\text{H}}f^2\Big(1+2\cos\big(\frac{\pi T_\text{c}}{\alpha}(f-f_\text{L})\big)+\frac{1+\cos\big(\frac{2\pi T_\text{c}}{\alpha}(f-f_\text{L})\big)}{2}\Big)df\nonumber\\
    %&=\frac{(1-\alpha)^3}{12T_\text{c}}+\frac{\alpha^3+3\alpha}{16T_\text{c}}+T_\text{c}^2\int_{0}^{\alpha/T_\text{c}}(f+f_\text{L})^2\Big(\cos\big(\frac{\pi T_\text{c}}{\alpha}f\big)+\frac{1}{4}\cos\big(\frac{2\pi T_\text{c}}{\alpha}f\big)\Big)df\nonumber\\
    %&=\frac{(1-\alpha)^3}{12T_\text{c}}+\frac{\alpha^3+3\alpha}{16T_\text{c}}+\frac{\alpha T_\text{c}}{\pi}\Big(\int_0^\pi (\frac{\alpha}{\pi T_\text{c}}u+f_\text{L})^2\cos(u)du+\frac{1}{8}\int_0^{2\pi}(\frac{\alpha}{2\pi T_\text{c}}u+f_\text{L})^2\cos(u)du\Big)\nonumber\\
    %&=\frac{(1-\alpha)^3}{12T_\text{c}}+\frac{\alpha^3+3\alpha}{16T_\text{c}}-\frac{2\alpha^2}{\pi^2T_\text{c}}+\frac{\alpha^3}{8\pi^2T_\text{c}}\nonumber\\
    &=\frac{(6-\pi^2)\alpha^3+(12\pi^2-96)\alpha^2-3\pi^2\alpha+4\pi^2}{48\pi^2T_\text{c}},
\end{align}
where 
\begin{align}\label{eq.104}
    &\int_{f_\text{L}}^{f_\text{H}}f^2\Big(1+\cos\big(\frac{\pi T_\text{c}}{\alpha}(f-f_\text{L})\big)\Big)^2df\nonumber\\
    &=\negthinspace\int_{f_\text{L}}^{f_\text{H}}\negthinspace f^2\negthinspace\Big(\frac{3}{2}\negthinspace+\negthinspace2\cos\negthinspace\big(\frac{\pi T_\text{c}}{\alpha}(f\negthinspace-\negthinspace f_\text{L})\big)\negthinspace+\negthinspace{\frac{1}{2}\cos\negthinspace\big(\frac{2\pi T_\text{c}}{\alpha}(f\negthinspace-\negthinspace f_\text{L})\big)}\negthinspace\Big)df,
\end{align}
and the individual terms are derived by
\begin{align}
    &\int_{f_\text{L}}^{f_\text{H}}\frac{3}{2}f^2df=\frac{\alpha^3+3\alpha}{8T_\text{c}^3},\\
    &\int_{\negthinspace f_\text{L}}^{\negthinspace f_\text{H}}\negthinspace 2\negthinspace f^2\negthinspace\cos\negthinspace\big(\frac{\pi T_\text{c}}{\alpha}(f\negthinspace-\negthinspace f_\text{L})\negthinspace\big)df\negthinspace=\negthinspace2\negthinspace\int_{\negthinspace 0}^{\negthinspace f_\text{H}-f_\text{L}}\negthinspace(f+f_\text{L})^2\negthinspace\cos\negthinspace\big(\frac{\pi T_\text{c}}{\alpha}\negthinspace f\big)df\nonumber\\
    &=\frac{2\alpha}{\pi T_\text{c}}\int_0^\pi (\frac{\alpha}{\pi T_\text{c}}u+f_\text{L})^2\cos(u)du\overset{(\ref{eq.where1})}{=}-\frac{4\alpha^2}{\pi^2T_\text{c}^3},\\
    &\int_{\negthinspace f_\text{L}}^{\negthinspace f_\text{H}}\negthinspace \frac{1}{2}\negthinspace f^2\negthinspace\cos\negthinspace\big(\frac{2\pi T_\text{c}}{\alpha}\negthinspace(f\negthinspace-\negthinspace f_\text{L})\big)\negthinspace=\negthinspace\frac{1}{2}\negthinspace\int_{\negthinspace0}^{\negthinspace f_\text{H}-f_\text{L}}\negthinspace(f\negthinspace+\negthinspace f_\text{L})^2\negthinspace\cos\negthinspace\big(\frac{2\pi T_\text{c}}{\alpha}\negthinspace f\big)df\negthinspace\nonumber\\
    &=\frac{\alpha}{4\pi T_\text{c}}\int_0^{2\pi} (\frac{\alpha}{2\pi T_\text{c}}u+f_\text{L})^2\cos(u)du=\frac{\alpha^3}{4\pi^2T_\text{c}^3}.
\end{align}

Consequently, the RMS bandwidth of RC pulses is derived by:
\begin{align}
    &B_\text{rms}^2=\frac{(6-\pi^2)\alpha^3+(12\pi^2-96)\alpha^2-3\pi^2\alpha+4\pi^2}{(48-12\alpha)\pi^2T_\text{c}^2}\nonumber\\
    &=\frac{B^2}{12}\cdot\frac{(6-\pi^2)\alpha^3+(12\pi^2-96)\alpha^2-3\pi^2\alpha+4\pi^2}{\pi^2(4-\alpha)(1+\alpha)^2}.
\end{align}

\end{appendices}

\printbibliography

\vspace{-3em}

\end{document}